\newcommand{\ket}[1]{{| #1 \rangle}}
\newcommand{\bra}[1]{{\langle #1 |}}
\newcommand{\expect}[1]{{\langle #1 \rangle}}
\newcommand{\bs}[1]{{\boldsymbol #1}}
\newcommand\id{\mathbb{I}_n}
\newcommand\bigz{\boldsymbol{0}}
\newcommand \ga{G_1}
\newcommand \gb{G_2}
\newcommand{\eqnref}[1]{Eq.~\eqref{#1}}
\newcommand{\appref}[1]{Appendix~\ref{#1}}
\newcommand{\subfiglabel}[1]{{(#1)}}
\newcommand{\figref}[1]{Fig.~\ref{#1}}
\newcommand{\subfigref}[2]{\figref{#1}\subfiglabel{#2}}
\newcommand{\secref}[1]{Sec.~\ref{#1}}
\newtheorem{theorem}{Theorem}
\newtheorem{lemma}[theorem]{Lemma}
\begin{document}

\title{Degeneracies and symmetry breaking in pseudo-Hermitian matrices}

\author{Abhijeet Melkani}
\email{amelkani@uoregon.edu}
\affiliation{Department of Physics, University of Oregon, Eugene, Oregon, USA}
\date{\today}

\begin{abstract}
Real eigenvalues of pseudo-Hermitian matrices, such as real matrices and $\mathcal{PT-}$symmetric matrices, frequently split into complex conjugate pairs. This is accompanied by the breaking of certain symmetries of the eigenvectors and, typically, also a drastic change in the behavior of the system. In this paper, we classify the eigenspace of pseudo-Hermitian matrices and show that such symmetry breaking occurs if and only if eigenvalues of opposite kinds collide on the real axis of the complex eigenvalue plane. This enables a classification of the disconnected regions in parameter space where all eigenvalues are real---which correspond, physically, to the stable phases of the system. These disconnected regions are surrounded by exceptional surfaces which comprise all the real-valued exceptional points of pseudo-Hermitian matrices. The exceptional surfaces, together with the diabolic points created by their intersections, comprise all points of pseudo-Hermiticity breaking. In particular, this clarifies that the degeneracy involved in symmetry breaking is not necessarily an exceptional point. We also discuss how our study relates to conserved quantities and derive the conditions for when degeneracies caused by external symmetries are susceptible to thresholdless pseudo-Hermiticity breaking. We illustrate our results with examples from photonics, condensed matter physics, and mechanics.
\end{abstract}

\maketitle

\section{Introduction}

Linear operators, such as those representable by matrices, are ubiquitous in physics forming many exact models of nature. They also occur as effective models when a more fundamental model is linearized around a point of interest. While Hermitian matrices are common in canonical quantum mechanics, the richer behavior of non-Hermitian matrices is being increasingly used to model gain/loss in open systems~\cite{elganainy2018nature, yoshida2019ring}, phase transitions~\cite{melkani2021polymers,fruchart2021nonreciprocal,hamazaki2019localization}, sensitivity to boundary conditions~\cite{Borgnia2020boundary,lin2023skin}, and various other phenomena excluded by assumptions of Hermiticity.

Pseudo-Hermitian matrices~\cite{mostafazadeh2010pseudo} are non-Hermitian matrices that can be similarity-transformed to their adjoints, $H = G^{-1} H^\dagger G$. They are ubiquitous in classical physics~\cite{melkani2021polymers, yoshida2019ring, Susstrunk2016phonons,kruss2022amplification,shankar2017topoflock} since all real-valued matrices are pseudo-Hermitian. Matrices with time-reversal symmetry~\cite{kawabata2019timereversal} or with parity-time symmetry ($\mathcal{PT}$-symmetry)~\cite{ozdemir2019parity} are also pseudo-Hermitian (see \secref{sec:definitions})---the latter being one of the earliest classes of non-Hermitian matrices to be analyzed in terms of symmetries~\cite{bender2005PT, ashida2020nonhermitian}.

Upon tuning some parameter, degenerate real eigenvalues of a pseudo-Hermitian matrix can turn into complex conjugate pairs. This phenomenon is known as spontaneous pseudo-Hermiticity breaking (henceforth simply called symmetry breaking) since it is accompanied by a change in the symmetries of the corresponding eigenvectors. Typically, the parameter quantifies an external source of bias, amplification, or dissipation~\cite{melkani2021polymers, elganainy2018nature}. Upon symmetry breaking the system exhibits qualitatively different behavior usually signifying the emergence of amplified/dissipated modes~\cite{elganainy2018nature} or even different thermodynamic phases~\cite{fruchart2021nonreciprocal,hamazaki2019localization,wang2022realcomplex}. Systematically analyzing the conditions for pseudo-Hermiticity breaking is then crucial to understanding the physical properties and potential applications of pseudo-Hermitian systems.

In this paper, we provide the necessary and sufficient conditions for symmetry breaking to occur in a pseudo-Hermitian matrix, $H$. We use the intertwining operator $G$, whose expectation value is a conserved quantity~\cite{ruzsicka2021conserved}, to classify the eigenspace of $H$. We demonstrate that symmetry breaking occurs when and only when eigenvalues associated with opposite signs of the conserved quantity collide on the real axis. This allows one to predict which real-valued degeneracies of a pseudo-Hermitian matrix can lead to symmetry breaking, and hence are ``unstable degeneracies". 

By characterizing the degeneracies we also determine the sets of pseudo-Hermitian matrices with real eigenvalues that can be continuously connected to each other without ever encountering symmetry breaking. These in turn correspond to all the disconnected stable phases (regions in parameter space where all eigenvalues are real) of a physical system.

Non-Hermitian matrices exhibit two types of eigenvalue degeneracies---diabolic points (DPs), where the number of independent eigenvectors equals the number of times an eigenvalue is repeated, and exceptional points (EPs)~\cite{heiss2012exceptional, kawabata2019exceptional,hu2022exceptionaltopo}, where the matrix cannot be diagonalized and its eigenvectors fail to span the complete space. We find that these disconnected regions in parameter space, where all eigenvalues are real, are surrounded by exceptional surfaces, which comprise all the real-valued EPs of pseudo-Hermitian matrices. Exceptional surfaces that are boundaries to two different regions may meet, annihilating each other and giving rise to DPs. These exceptional surfaces, together with the diabolic points created by their intersections, comprise all points of pseudo-Hermiticity breaking.

Furthermore, degeneracies in Hamiltonian matrices are either accidental or caused by symmetries. In this paper, we also use the intertwining operator to derive the conditions for when degeneracies caused by external symmetries are susceptible to thresholdless pseudo-Hermiticity breaking (i.e., symmetry breaking at infinitesimal amounts of non-Hermiticity).

Much of our paper builds on the insight by Refs. \cite{zhang2020PT,langer2004krein,zhang2016Krein} that pseudo-Hermitian matrices can be mapped to $G-$Hamiltonian matrices, a class of matrices studied by Krein, Gel'fand, Lidskii, and others in the context of stability of mechanical systems~\cite{YakubovichStarzhinskii,coppel1965stability,chang2019bifurcation,mochizuki2021fate}. This mapping is being increasingly used in works on quadratic Bosonic systems~\cite{Flynn2020restoring, peano2018topological} that can be described by an effective Bogoliubov-de Gennes (BdG) Hamiltonian that is pseudo-Hermitian. (See Ref.~\cite{Flynn2020deconstructing} for a detailed analysis.)

This article is structured as follows. We define pseudo-Hermitian matrices and intertwining operators in \secref{sec:definitions} and review the phenomenon of pseudo-Hermiticity breaking. In \secref{sec:structure} we use the intertwining operator to classify the eigenspace. In \secref{sec:conditions} we provide the main results of the paper including the conditions for pseudo-Hermiticity breaking to occur and the classification of stable phases of a pseudo-Hermitian matrix. These results are illustrated via a schematic example in \secref{sec:schematic}. In \secref{sec:boundaries} we characterize the boundaries of the stable phases, i.e., the points of symmetry breaking. We then discuss conserved quantities in pseudo-Hermitian systems and the interplay of pseudo-Hermiticity with degeneracies caused by external symmetries in \secref{sec:symmetries}. In Sec.~\ref{sec:examples}, we provide illustrative examples of well-known pseudo-Hermitian Hamiltonians from photonics, condensed matter physics, and mechanics.

\section{Pseudo-Hermitian matrices and symmetry breaking}\label{sec:definitions}

A matrix $H$ is called pseudo-Hermitian if it is similar to its conjugatectranspose. That is
\begin{equation}\label{eq:defnPseudo}
H = G^{-1} H^\dagger G
\end{equation}
for some invertible matrix $G$ called the intertwining operator~\cite{mustafa2008intertwining,ruzsicka2021conserved}. $G$ is not unique and can always be chosen to be Hermitian~\cite{zhang2020PT}, which we will assume is the case from here on. Since every matrix is similar to its transpose, an equivalent definition is that $H$ is similar to its complex conjugate,
\begin{equation}\label{eq:defnPseudo2}
    H = S H^* S^{-1}
\end{equation}
for some invertible matrix $S$. A real matrix is then trivially pseudo-Hermitian. The equation above can also be written as $[H,S\mathcal{T}] = 0$ where $\mathcal{T}$ is the antilinear complex-conjugation operator, which acts as the time-reversal operator. For this reason, $H$ is also said to be $S\mathcal{T}$ symmetric---a familiar case is when $S$ is the parity operator $\mathcal{P}$.

Non-Hermitian matrices with the $\mathcal{K}$ or $\mathcal{Q}$ internal symmetries (in the Bernard-LeClair notation~\cite{liu2019topodefects,kawabata2019symmetry, zhou2019topology}) are all special cases of pseudo-Hermitian matrices. Matrices satisfying $H = -G^{-1} H^\dagger G$ with an additional minus sign (or indeed any phase factor) can be transformed, via $H\to iH$, to also satisfy \eqnref{eq:defnPseudo}. 

Pseudo-Hermiticity breaking occurs when a (degenerate) real eigenvalue splits into a complex conjugate pair on the variation of a parameter, such as the Bloch wave vector for a periodic system. If $\lambda$ is a complex-valued eigenvalue with associated eigenvector $\ket{R}$ then, by \eqnref{eq:defnPseudo2}, $S \ket{R}^* = S\mathcal{T}\ket{R}$ is an eigenvector of $H$ associated with $\lambda^*$ and is, thus, linearly independent of $\ket{R}$. Conversely, if $\lambda$ were real and nondegenerate, then $S \mathcal{T} \ket{R}$ and $\ket{R}$ would be linearly dependent, i.e., $\ket{R}$ would be an eigenvector of $S \mathcal{T}$. In general, eigenvectors of $H$ with real eigenvalues can be chosen to also be eigenvectors of the ``symmetry operator" $S \mathcal{T}$. On the variation of a parameter, when a degenerate real eigenvalue splits into complex conjugate pairs, this symmetry of the eigenvectors gets spontaneously broken: $S \mathcal{T} \ket{R}$ and $\ket{R}$ become linearly independent. 

While the above formulation in terms of $S$ and $\mathcal T$ is familiar~\cite{bender2005PT,ashida2020nonhermitian}, in the following we formulate this behavior in terms of $G$. This will enable us to uncover additional features including the conditions for symmetry breaking to occur.

\section{Structure of the eigenspace}\label{sec:structure}

We denote a column vector by the ket $\ket{v}$ and its conjugate transpose $(\ket{v})^\dagger$ by the bra $\bra{v}$. The right and left eigenvectors of $H$ are defined by $H\ket{R_i} = \lambda_i\ket{R_i}$ and $\bra{L_i}H = \lambda_i \bra{L_i}$. They share the same eigenvalues. Taking conjugate transpose of the latter equation we get
\begin{equation}
H^\dagger \ket{L_i} = \lambda_i^* \ket{L_i}.
\end{equation} 
Operating \eqnref{eq:defnPseudo} on $G^{-1}\ket{L_i}$ shows that $G^{-1}\ket{L_i}$ is a (right) eigenvector of $H$ with eigenvalue $\lambda_i^*$.

First let us assume that there are no degeneracies in the eigenvalues of $H$. In that case we can define a biorthonormal eigenbasis for $H$~\cite{ashida2020nonhermitian},
\begin{equation}\label{eq:Hdiagonal}
    H = \sum_i \lambda_i \ket{R_i} \bra{L_i}, \quad \expect{L_i|R_j} = \delta_{ij}.
\end{equation}
Now if $\ket{R_i}$ is the eigenvector of $H$ with $\operatorname{Im}\lambda_i\neq 0$ then the eigenvector corresponding to the eigenvalue $\lambda_j = \lambda_i^*$ is $\eta G^{-1}\ket{L_i}$ where $\eta$ is some constant. Furthermore,
\begin{equation}
    \expect{R_j|G|R_j} = \eta \expect{R_j|G G^{-1}|L_i} = 0.
\end{equation}

On the other hand if $\lambda_i$ is real we should have $\ket{R_i} = \mu G^{-1}\ket{L_i}$ ($\mu$ being some non-zero constant). In this case,
\begin{equation}
    \expect{R_i|G|R_i} =  \mu\expect{R_j|G G^{-1}|L_i} = \mu.
\end{equation}
Since $G$ is invertible and Hermitian, $\mu$ has to be a nonzero real constant.

On relaxing our assumption of no degeneracies, these statements generalize as follows (see \appref{app:miniProofs} for details):

If $\lambda$ is complex valued, $\operatorname{Im} \lambda \neq 0$, then $\expect{R|G|R} = 0$ for all associated eigenvectors $\ket{R}$.

If $\lambda$ is real valued, there are three possibilities. If $\expect{R|G|R}$ is positive for all vectors, $\ket{R}$, in the eigenspace, we say the eigenvalue $\lambda$ is of \emph{positive kind} (the first kind in Krein's formulation~\cite{YakubovichStarzhinskii}). Similarly, if $\expect{R|G|R}$ is always negative, the eigenvalue $\lambda$ is of \emph{negative kind}. The third possibility is that one is able to find two eigenvectors, $\ket{R_j}$ and $\ket{R_k}$ in the eigenspace, such that $\expect{R_j|G|R_j}$ and $\expect{R_k|G|R_k}$ are of opposite signs. In this case the eigenvalue is said to be of \emph{indefinite kind} and it is possible to find an eigenvector $\ket{R_m}$, say, such that $\expect{R_m|G|R_m} = 0$. Exceptional point degeneracies are always of indefinite kind while non-degenerate real eigenvalues are never indefinite.

We are now ready to state the main result of the paper, which connects symmetry breaking to the expectation value of the intertwining operator $G$. 

\section{Conditions for symmetry breaking}\label{sec:conditions}

Suppose $G(\bs k)$ is a (continuously) parameterized matrix that is Hermitian and invertible for all values of the parameter(s) $\bs k$; and suppose $H(\bs k)$ is a (continuously) parameterized pseudo-Hermitian matrix obeying $H(\bs k) = G(\bs k)^{-1} H(\bs k)^\dagger G(\bs k)$. Let $\bs k = \bs k_0 + \epsilon \bs q$ where $\epsilon$ is small, $\bs q $ is an arbitrary direction in parameter space, and $\bs k_0$ is some reference point such that $H(\bs k_0)$ has all its eigenvalues real. 

$H(\bs k_0)$ is said to be protected from pseudo-Hermiticity breaking (strongly stable in Krein's formulation) if for any $\bs q $, $H(\bs k = \bs k_0 + \epsilon \bs q)$ also has all real eigenvalues. For example, if $H(\bs k_0)$ has no degeneracies then it is protected from symmetry breaking. The necessary and sufficient conditions for symmetry breaking to occur in a pseudo-Hermitian matrix are provided by the Krein-Gel'fand-Lidskii (KGL) Theorem.

\textit{Krein-Gel'fand-Lidskii Theorem. } $H(\bs k_0)$ is protected from pseudo-Hermiticity breaking if and only if the eigenvalues of $H(\bs k_0)$ are definite, i.e., $\expect{v|G(\bs k_0)|v}\neq 0$ for all eigenvectors $\ket{v}$ of $H(\bs k_0)$. The proof of this statement can be found in \appref{app:KGL} and in Refs.~\cite{coppel1965stability,YakubovichStarzhinskii}.

In particular, this implies that while an exceptional point at a real eigenvalue is sufficient for symmetry breaking to occur (since $\expect{v|G|v} = 0$ at an EP), it is not necessary. (We provide explicit examples of pseudo-Hermitian matrices with symmetry breaking at diabolic points in the sections below.)

Eigenvalues of positive and negative kind retain their kind on the variation of a parameter (see \appref{app:continuity}). This provides predictive power for the purposes of engineering (or avoiding) exceptional points and symmetry-breaking points. For example, suppose that on the variation of a parameter, two definite real eigenvalues are about to meet on the real axis. If the eigenvalues are of the same kind ($\expect{v|G|v}$ is of the same sign) then even after colliding they are forbidden from moving off the real axis or giving rise to an exceptional point degeneracy.

The matrix $H$ has as many eigenvalues of positive (negative) kind as the number of positive (negative) eigenvalues of the intertwining operator $G$
(see \appref{app:continuity}). In particular, if $G$ were positive-definite (or negative-definite) then all eigenvalues of $H$ would always be real and all degeneracies diabolic. $H$ would then be equivalent to a Hermitian matrix, a condition known as \textit{exact} pseudo-Hermiticity~\cite{mostafazadeh2003exact} or quasi-Hermiticity~\cite{ashida2020nonhermitian}.

Suppose an $N\times N$ pseudo-Hermitian matrix, protected from symmetry breaking, has eigenvalues, $\lambda_1\leq \dots \leq \lambda_N$. We can characterize it by a signature---an ordered list of $N$ signs, such as $(+,+,-,+,\dots)$, where the $n$th sign signifies the kind of the $n$th eigenvalue. Two strongly stable pseudo-Hermitian matrices can be continuously connected to each other, without every encountering pseudo-Hermiticity breaking, if and only if they have the same signature~\cite{YakubovichStarzhinskii}. If $G$ has $p$ positive eigenvalues (and hence $N-p$ negative eigenvalues), then the number of possible distinct signatures is $\frac{N!}{p!(N-p)!}$ (or $N$ choose $p$). These characterize all the disconnected regions of parameter space where the eigenvalues of a pseudo-Hermitian matrix are all real. In many physical systems these correspond to distinct phases characterized by localization, absence of dissipation, etc. (see \secref{sec:examples}).

The matrices at the boundaries enclosing these regions protected from symmetry breaking (i.e., at the points at which symmetry breaking occurs) have eigenvalues of indefinite kind. In \secref{sec:boundaries} we show that if two strongly stable regions share a point on the boundary, the point is diabolic. All other points on the boundaries are exceptional degeneracies and can be uniquely associated with a single strongly stable region and its signature. It is known that parametrized pseudo-Hermitian matrices have EPs with codimension $1$ and DPs with codimension $3$~\cite{fruchart2021nonreciprocal,seyranian2005coupling,delplace2021multifold}. The large codimensionality of DPs compared to that of EPs explains why symmetry breaking is observed usually at EPs. 

It is useful to compare this characterization with the classification of symmetry-protected topological (SPT) phases in non-Hermitian matrices~\cite{Wojcik2020homotopy,liu2019topodefects,kawabata2019symmetry, zhou2019topology} (which generalizes the classification of Hermitian matrices~\cite{chiu2016reviewHermitian}). Matrices belonging to the same symmetry class/phase are topologically equivalent---they are characterized by the same topological invariants and they can be continuously deformed into each other without encountering any degeneracy (the spectral gap is always open). For such a classification all degeneracies are considered equivalent. Here, we are considering properties of the individual eigenspaces of a matrix that too remain invariant unless there are degeneracies and the eigenspaces merge. Moreover, we distinguish between two kinds of degeneracies on physical grounds---stable ones that prohibit symmetry breaking and unstable ones that allow symmetry breaking. 

The KGL theorem was originally formulated with the assumption that the intertwining operator $G$ is constant and does not depend on the parameter(s) $\bs k$~\cite{YakubovichStarzhinskii}. Many systems, especially real-valued systems, however, have intertwining operators that are continuous functions of the parameters. Our proof of the KGL theorem in \appref{app:KGL} and \appref{app:continuity} generalizes to the case where $G(\bs k)$ is a continuous function of the parameter(s) $\bs k$ as long as it is Hermitian and invertible for all values of $\bs k$.

Since the intertwining operator $G$ is not unique, we provide an exhaustive method to find all intertwining operators for a pseudo-Hermitian matrix $H$ in \appref{app:finding} and show that they form a vector space. Our results will apply to each Hermitian intertwining operator that one can find. Finally, since pseudo-Hermiticity is equivalent to commutation with a generalized $\mathcal{P}\mathcal{T}$ operator, in \appref{app:PT} we provide an attempt to formulate the results of this section in terms of $\mathcal{P}$ and $\mathcal{T}$.

\section{A schematic example}\label{sec:schematic}

Let us illustrate these results with an example. Suppose $H$ is a $3\times 3$ pseudo-Hermitian matrix with respect to the conveniently diagonalized,
\begin{equation}
    G = \begin{pmatrix}
    1 & 0 & 0\\
    0 & 1 & 0\\
    0& 0 & -1
    \end{pmatrix}.
\end{equation}

The general form of $H$ can be written as,
\begin{equation}
    H = \begin{pmatrix}
    r_1 & a & b\\
    a^* & r_2 & c\\
    -b^*& -c^* & r_3
    \end{pmatrix},
\end{equation}
where $r_i$ are real numbers and $a,b,c$ are complex numbers. When $a=b=c=0$, the three eigenvalues are real and located at $\lambda_i = r_i$. The eigenvectors are the permutations of $(1,0,0)^T$ and it is straightforward to check that $\lambda_3$ is of opposite kind to the other two eigenvalues. 

On increasing $|a|$, $\lambda_3$ remains fixed while $\lambda_1$ and $\lambda_2$ move away from each other on the real axis due to level repulsion. (The matrix is Hermitian when $b=c=0$.) If $|c|$ were increased with $a=b=0$, then $\lambda_2$ and $\lambda_3$ move, first towards each other along the real axis and then since they are of opposite kinds, away from each other in the complex plane after colliding. Similarly, if $|b|$ were increased then $\lambda_1$ and $\lambda_3$ move towards each other.

\begin{figure}[htb]
	\includegraphics[width=\columnwidth]{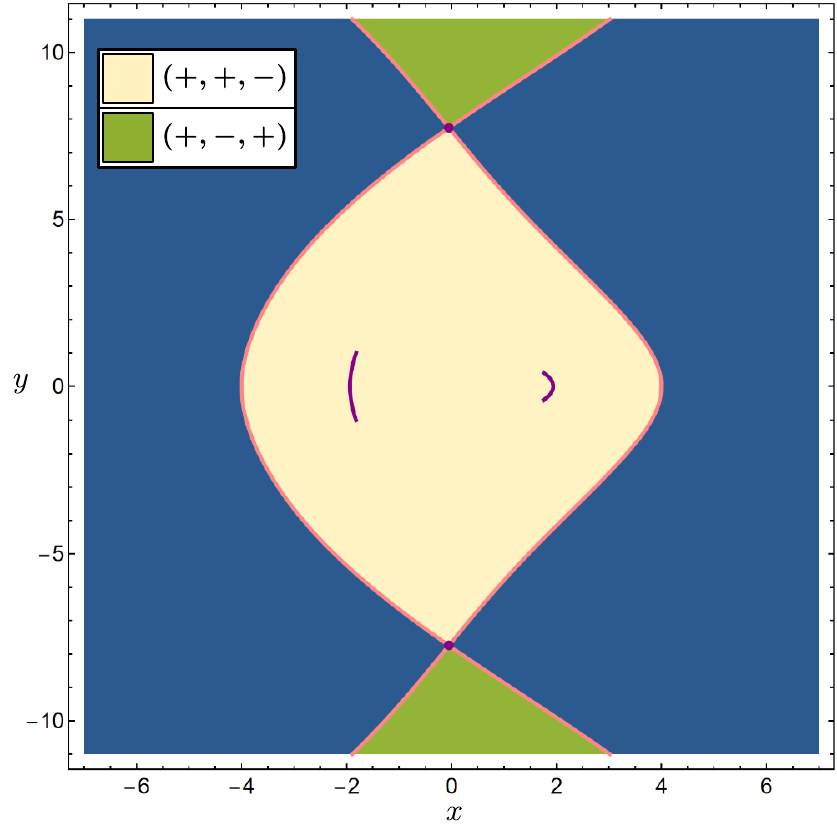}
	\caption[]{\label{fig:regions} 
	Parameter space of the matrix $H(x,y)$ in \eqnref{eq:schematic} which obeys $H = G^{-1} H^\dagger G$ where $G = \operatorname{diag}(1,1,-1)$. In the blue region of parameter space, pseudo-Hermiticity is broken and the eigenvalues of $H$ are no longer all real. The yellow region is protected from symmetry breaking and has signature $(+,+,-)$, i.e., the eigenvalue of negative kind is the largest. Similarly, the green region has signature $(+,-,+)$. Regions of different signatures are topologically disconnected. The boundaries between these regions have exceptional point degeneracies (pink curves), except where the curves meet and annihilate to form a diabolic point (purple points). The central region also contains two disconnected diabolic curves (purple curves) involving two eigenvalues of positive kind. Since they are of the same kind these degeneracies cannot lead to symmetry breaking.
	}
\end{figure}

Let us constrain $H$ to a two-dimensional parameter space with the parametrization,
\begin{equation}\label{eq:schematic}
    H(x,y) = \begin{pmatrix}
    1 & y & ix\\
    y & \frac{3}{2} & \frac{i}{2}\sin(\frac{\pi y}{8})\\
    ix& \frac{i}{2}\sin(\frac{\pi y}{8}) & 9
    \end{pmatrix}.
\end{equation}
In \figref{fig:regions}, we show the parameter space of this matrix. The region around the origin, $(0,0)$, is protected from symmetry breaking and has the signature $(+,+,-)$. As we move from the origin along the positive $x$ axis, the eigenvalue at $r_2 = \frac{3}{2}$ remains fixed while the other two eigenvalues move towards each other. At $x=\frac{\sqrt{15}}{2} \approx 1.94$ a diabolic degeneracy is encountered but since it involves eigenvalues of the same kind, it cannot give rise to symmetry breaking even in a higher dimensional parameter space. On increasing $x$ further, two eigenvalues of opposite kind meet at an EP at $x=4$ beyond which pseudo-Hermiticity gets broken.

Figure~\ref{fig:regions} shows that the boundary between symmetry broken regions of the $2$D parameter space and regions protected from symmetry breaking are given by exceptional curves. When the exceptional curves meet, they annihilate each other to form diabolic points~\cite{delplace2021multifold}. These are examples of symmetry breaking at DPs---perturbing the value of $x$ at the DP at $(x,y)\approx(-0.05,7.75)$, for example, will cause the eigenvalues to become complex valued. While it seems from the figure that the region with signature $(+,-,+)$ consists of disconnected areas, this is just an artifice of being constrained to a two-dimensional parameter space. 

\section{Characterizing the points of symmetry breaking}\label{sec:boundaries}

The points of symmetry breaking of an $N \times N$ pseudo-Hermitian matrix form the boundaries separating strongly stable regions of parameter space from the regions where at least one eigenvalue is complex valued. To characterize these points we will only focus on the most relevant case of boundaries formed when two eigenvalues of opposite kind meet each other. (Higher order degeneracies are rare in the absence of additional symmetries.) In the subspace corresponding to these two eigenvalues of opposite kind, a general intertwining operator (after diagonalization) is given by
\begin{equation}\label{eq:originalG}
    G_0 = \begin{pmatrix}
        \eta_1 & 0\\
        0 & -\eta_2
    \end{pmatrix}
\end{equation}
where $\eta_1$ and $\eta_2$ are positive real numbers. The relation $H_0 = G_0^{-1} H_0^\dagger G_0$ eliminates $4$ out of $8$ real parameters of a generic complex-valued $2\times 2$ matrix. We can parametrize such a pseudo-Hermitian matrix as
\begin{equation}\label{eq:originalH}
    H_0 = \begin{pmatrix}
        \lambda + a & \eta_2 b e^{i\theta}\\
        -\eta_1 b e^{-i\theta} & \lambda - a
    \end{pmatrix}
\end{equation}
where $a,b,\theta,\lambda$ are real. Since $\theta$ does not enter the characteristic equation and $\lambda$ only shifts the eigenvalues by the same amount, the physics of a generic $2\times 2$ pseudo-Hermitian matrix is controlled by the gain-loss factor $(a)$ and the coupling factor $(b)$ (see \secref{sec:exa:qubit}). The matrix has degenerate eigenvalues (at $\lambda$) when we set $b = \frac{a}{\sqrt{\eta_1 \eta_2}}$,
\begin{equation}
    H_0 = a\begin{pmatrix}
        1 & \sqrt{\frac{\eta_2}{\eta_1}} e^{i\theta}\\
        -\sqrt{\frac{\eta_1}{\eta_2}} e^{-i\theta} & -1
    \end{pmatrix} + \begin{pmatrix}
        \lambda & 0\\
        0 & \lambda
    \end{pmatrix}.
\end{equation}
The behavior of the degeneracy is controlled by the parameter $a$ --- the degeneracy is diabolic when $a = 0$ and is exceptional at all other values. 

In \appref{app:boundaries}, we consider pseudo-Hermitian matrices arbitrarily close to $H_0$. That is, we consider $G(\epsilon), H(\epsilon)$ such that $H(\epsilon) = G(\epsilon)^{-1} H(\epsilon)^\dagger G(\epsilon)$ and $G(0) = G_0, H(0)=H_0$. In the limit of $0<\epsilon \ll <1$, we find that strongly stable matrices close to $H_0$ with $a>0$ can only be from the region $(-,+)$ while those close to $H_0$ with $a<0$ can only be from the region $(+,-)$. In contrast, one can always find strongly stable matrices from both the regions $(-,+)$ as well as $(+,-)$ arbitrarily close to $H_0$ with $a=0$ (the DP) (see \figref{fig:twoByTwo}).

\begin{figure}[htb]
	\includegraphics{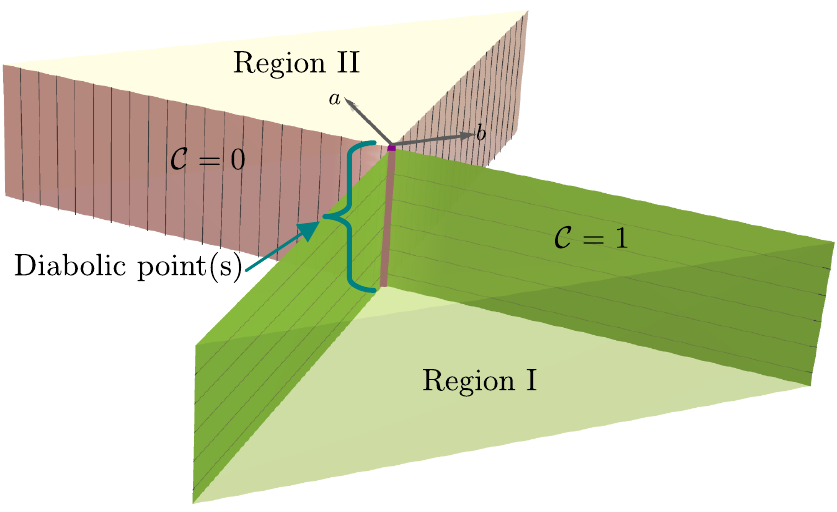}
	\caption[]{\label{fig:twoByTwo} 
	Strongly stable regions in parameter space (where all eigenvalues are real) are surrounded by exceptional surfaces that meet and annihilate at diabolic points (purple line). These EPs and DPs together make up all possible points where pseudo-Hermiticity breaking occurs. Two such strongly stable regions with signature $(\dots, +, -, \dots)$ and $(\dots, -, +, \dots)$ are shown (labelled I and II respectively). The exceptional surfaces (green with horizontal mesh and pink with vertical mesh respectively) can be identified uniquely with the strongly stable region they enclose and the corresponding signature. Their topological charges can also be written in terms of the zeroth Chern number ($\mathcal{C}=1$ and $\mathcal C = 0$ respectively). Pseudo-Hermiticity is broken in the region outside these surfaces. In the neighbourhood of a symmetry-breaking point, the Hamiltonian can be reduced to the two parameter matrix $\begin{pmatrix} a & b\\ -b & -a \end{pmatrix}$ where $a\in \mathbb R$ is a gain-loss parameter and $b \in \mathbb R$ is a coupling parameter [see \eqnref{eq:originalH}].
	}
\end{figure}

Thus, the EPs on the boundaries can be identified with the topological index, i.e., the signature, of the strongly stable region they enclose. In fact, EPs can be characterized by several independent topological indices~\cite{kawabata2019exceptional}. For example, they can also be characterized by the topological indices associated with symmetry protected exceptional surfaces/rings that arise in systems with chiral symmetry~\cite{yoshida2019ring, yoshida2020exceptional}. This relies on the fact that $iH$ has the non-Hermitian chiral symmetry, $(iH) G + G (iH)^\dagger = 0$ and thus one can follow the procedure first laid out in Ref.~\cite{yoshida2019ring}. The relevant topological index is the zeroth Chern number number associated with an extended Hermitian matrix created from $H$ (see \appref{app:topological}). We find that the exceptional line $a>0$ carries a Chern number of $0$ while the line $a<0$ has Chern number $1$ (see \figref{fig:twoByTwo}).

\section{Relation with symmetries}\label{sec:symmetries}

Degeneracies in Hamiltonian matrices are either accidental or caused by symmetries. In Hermitian Hamiltonians, the latter case is more common due to level repulsion. It is natural to ask whether degeneracies caused by symmetries are susceptible to pseudo-Hermiticity breaking? Indeed if this were the case we would observe \emph{thresholdless} pseudo-Hermiticity breaking --- even a small non-Hermitian perturbation of a spatially symmetric Hermitian Hamiltonian would cause complex-valued eigenvalues (see Ref.~\cite{ge2014degeneracy} and \secref{sec:exa:lattice} for examples). We show that the intertwining operator provides the right language to address this question.

We recall that $H$ has a symmetry described by a group $\mathcal{G}$ if $H$ commutes with the matrix representations of the elements of $\mathcal{G}$. We assume that the representations are unitary---a standard assumption in canonical quantum mechanics since it keeps the inner product invariant. Additionally, we recall that any representation $\eta$ can be generically broken into irreps (irreducible representations), $\eta = \pi_1\oplus \pi_2\oplus \dots$, and that the Hamiltonian $H$ becomes degenerate in each subspace on which an irrep acts~\cite{Woit2017quantum}. This is the reason why symmetries typically give rise to degeneracies unless, for example, the underlying group is Abelian, in which case the irreps are one-dimensional.

Let us start below the threshold for symmetry breaking such that a pseudo-Hermitian matrix $H$ has all real eigenvalues and can be diagonalized. Now on tuning a parameter suppose the spatial symmetry of the Hamiltonian is explicitly broken, which allows for the previously degenerate eigenvalues to move away from each other. Would they stay on the real axis or would they move off in the complex plane?

To answer this question we need to classify the subspace $\Omega$ on which an irreducible representation $\pi$ acts. Before spatial symmetry is explicitly broken, $\Omega$ was also one of the degenerate eigenspaces of $H$. Now, since $\pi$ is an irrep, for any non-zero vector $\ket{v}\in \Omega$ one can operate the group elements, $\pi(g_i)\pi(g_j)\dots\ket{v}$ to span the whole space $\Omega$. Computing the expectation value of the intertwining operator $G$ for each of these spanning vectors we get terms like
\begin{equation}
    \bra{v}\dots \pi(g_j)^\dagger\pi(g_i)^\dagger G\pi(g_i)\pi(g_j)\dots\ket{v}.
\end{equation}
The relative sign of these terms capture whether or not we would see thresholdless symmetry breaking. We see that if $G$ were to commute with the elements of $\mathcal{G}$, all these terms would equal $\bra{v} G \ket{v}$ implying that the subspace is either of positive or negative kind (as long as $\bra{v} G \ket{v}$ is non-zero). The commutation of the intertwining operator, $G$, with the group elements thus ensures that the degeneracy is stable such that on loss of spatial symmetry even though the degeneracy of eigenvalues is broken, they still stay on the real axis.

We note here that the intertwining operator $G$ has another property associated with symmetries ---it provides a conserved quantity for pseudo-Hermitian systems~\cite{ruzsicka2021conserved}. To show this, we first define the time propagation matrix $U(t) = e^{-i H t}$, which is the solution to the Schr\"{o}dinger equation, $i\frac{dU}{dt} = HU$. Since $H$ is not Hermitian, $U(t)$ is not unitary; instead it satisfies~\cite{liu2022floquet}
\begin{equation}\label{eq:eqnU}
U^{-1}(t) = G^{-1}U^\dagger(t) G.
\end{equation}
The expectation value of $G$ with respect to a time-evolving vector $\ket{v(t)} = U(t)\ket{v(0)}$ is independent of time.
\begin{align}
\expect{v(t)|G|v(t)} &= \expect{v(0)|U^\dagger(t) G U(t)|v(0)} \nonumber\\
&= \expect{v(0)| G |v(0)}.
\end{align}
$\expect{v(t)|G|v(t)}$ is then a conserved quantity of the system ~\cite{ruzsicka2021conserved}. 

\section{Examples}\label{sec:examples}

In the following, we provide some examples of physical Hamiltonians exhibiting pseudo-Hermiticity breaking. We study a minimal example of a two-level system from photonics in \secref{sec:exa:qubit}, in particular noting the role of the conserved quantity for a system with gain and loss. In \secref{sec:exa:lattice}, we study a lattice in which non-Hermiticity arises from asymmetric couplings and show how pseudo-Hermiticity breaking can be analyzed in the submatrices of the Hamiltonian. Finally, in \secref{sec:exa:oscillators}, we examine coupled dissipative oscillators and find an interplay of two simultaneous intertwining operators.

These examples demonstrate how various physical systems that were previously considered to exhibit unique rich behavior can be analyzed from the unified perspective of pseudo-Hermiticity.

\subsection{Gain and loss in a qubit}\label{sec:exa:qubit}

One of the simplest pseudo-Hermitian system is a two-level system describing the physics of two coupled sites, one experiencing gain and the other suffering loss.
The Hamiltonian
\begin{equation}
    H = \begin{pmatrix}
    -ig_1 & \kappa\\
    \kappa & ig_2
    \end{pmatrix},
\end{equation}
where $g_1,g_2$ describe the amplification/dissipation at each site and $\kappa$ is the coupling constant describes such a system. See the review articles Refs.~\cite{ozdemir2019parity, elganainy2018nature} for examples of physical setups described by such a Hamiltonian. When $g_1=g_2=g$, the gain and loss are balanced and $H$ becomes pseudo-Hermitian, $H = G^{-1} H^\dagger G$, with
\begin{equation}
    G= \begin{pmatrix}
    0 & 1\\
    1 & 0
    \end{pmatrix}.
\end{equation}

If $\begin{pmatrix} a(t)\\ b(t) \end{pmatrix} 
= \begin{pmatrix} |a(t)|e^{i\theta_1(t)}\\ |b(t)|e^{i\theta_2(t)} \end{pmatrix}$ 
is a solution to the Schr\"{o}dinger equation, then the conserved quantity is
\begin{equation}
    C = \begin{pmatrix}
    a^* & b^*
    \end{pmatrix}\begin{pmatrix}
    0 & 1\\
    1 & 0
    \end{pmatrix}\begin{pmatrix}
    a\\
    b
    \end{pmatrix} = 2|a(t) b(t)|\cos(\theta_1(t) - \theta_2(t)).
\end{equation}

Furthermore, $G$ has eigenvalues $+1$ and $-1$ so the two modes of $H$ are of opposite kinds and we can expect symmetry breaking. (Indeed for a two-level system we either have this case or the trivial case of both modes being of the same kind for which the dynamics is similar to Hermitian dynamics.)

\begin{figure}[htb]
	\includegraphics{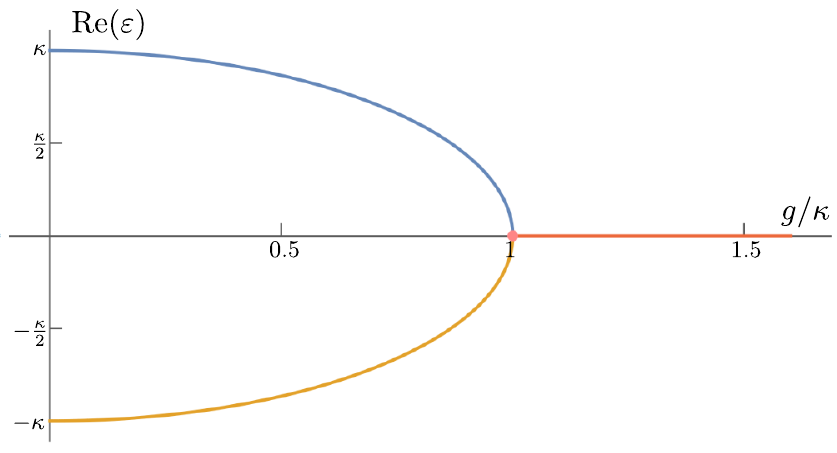}
	\caption[]{\label{fig:qubit} 
	Real part of the eigenvalues of $H = \begin{pmatrix} -ig & \kappa\\ \kappa & +ig \end{pmatrix}$ which describes two coupled modes with equal and opposite gain/loss. When $g< \kappa$, the two eigenvalues, which are of opposite kind, are real. They meet each other at an exceptional point when $g/\kappa$ reaches the threshold value of $1$ and on further increasing $g/\kappa$, the eigenvalues become complex valued.
	}
\end{figure}

The eigenvalues of $H$ are $\pm \sqrt{\kappa^2 - g^2}$. When $\kappa>g$, the coupling dominates the amplification/dissipation, the eigenvalues are real, and pseudo-Hermiticity symmetry is unbroken. Symmetry breaking occurs when $g$ reaches its critical value, $g_c = \kappa$, after which the eigenvalues of $H$ become complex (see \figref{fig:qubit}).

\textit{Symmetry unbroken phase ($g<\kappa$). }The eigenvalues are real and can be written as $\pm \eta$, where $\eta =\sqrt{\kappa^2 - g^2}$ is real and positive. The eigenvectors are $\ket{v_\pm} = \begin{pmatrix} 1\\ \pm e^{\pm i\theta} \end{pmatrix}$ where $\theta= \arcsin (g/\kappa)$. We have $\expect{v_s|G|v_s} = 2s\cos(\theta)$ so the eigenvalue $+\eta$ is of positive kind while $-\eta$ is of negative kind.

The general solution to the Schr\"{o}dinger equation is
\begin{equation}
    \ket{v(t)} = \begin{pmatrix} 
    c_1 \cos(\theta - \eta t) + i c_2 \sin(\eta t)\\ 
    c_2 \cos(\theta + \eta t) + i c_1 \sin(\eta t)
    \end{pmatrix}
\end{equation}
where $c_1,c_2$ specify the initial conditions. If we start with a $\frac \pi 2$ phase shift, $c_1 = e^{i\alpha}|c_1|,c_2 = ie^{i\alpha}|c_2|$, then the two sites remain phase locked due to the conservation of $\expect{v|G|v}$.

\textit{Symmetry broken phase ($g>\kappa$). }The eigenvalues are now complex conjugates and can be written as $\pm i\mu$, where $\mu =\sqrt{g^2 - \kappa^2}$ is real and positive. The eigenvectors of $H$ are $\ket{v_\pm} = \begin{pmatrix} \kappa \\ i(g \pm \mu) \end{pmatrix}$ and $\expect{v_s|G|v_s} = 0$.

The general solution to the Schr\"{o}dinger equation is
\begin{equation}
    \ket{v(t)} = \begin{pmatrix} 
    c_1\kappa e^{\mu t} + c_2\kappa e^{-\mu t}\\ 
    i c_1 (g+\mu) e^{\mu t} + i c_2 (g - \mu) e^{-\mu t}
    \end{pmatrix}.
\end{equation}
For nonzero $c_1$, since the amplitudes of both the sites increase exponentially for large $t$, asymptotically the sites develop a phase difference of $\frac \pi 2$. Again this is predicted directly from the conservation of $\expect{v|G|v}$ since as $|v_1 v_2|$ increases, $\cos(\theta_1 - \theta_2)$ should diminish.

\textit{Symmetry breaking point ($g=\kappa$). }At exactly $g=\kappa$, the system exhibits an exceptional point. The eigenvalues equal zero, and the eigenvectors coalesce to a single eigenvector, $\ket{v} = \begin{pmatrix} 1 \\ i \end{pmatrix}$ with $\expect{v|G|v} = 0$ as expected from an exceptional point.

While this two-level Hamiltonian is well known, the fact that it has a conserved quantity, which leads to a $\frac \pi 2$ phase-locking, was never noted. Furthermore, the physics of many finite-dimensional systems experiencing gain and loss cannot be captured in terms of effective two-level Hamiltonians~\cite{xiong2021higher, xiong2022blue}. These systems exhibit richer behavior with phase diagrams of pseudo-Hermiticity broken regions separated by exceptional lines and higher order exceptional points~\cite{xiong2021higher, xiong2022blue}. For these systems exact solutions are difficult, and using the intertwining operator to predict the phase diagram and the stability of degeneracies may be the only tractable method to lead to any physical insights.

\subsection{Lattice with asymmetric hopping}\label{sec:exa:lattice}

\begin{figure}[htb]
	\includegraphics{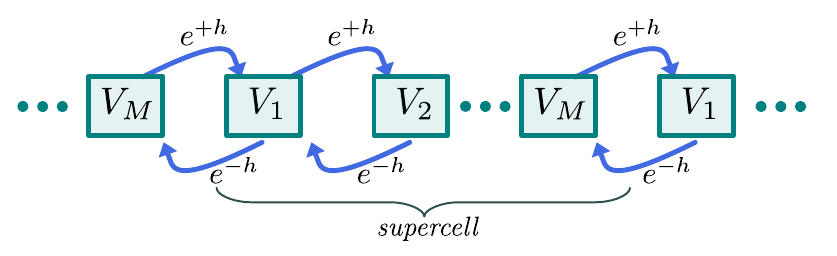}
	\caption[]{\label{Fig:latticeSchematic} 
	A lattice with asymmetric hopping described by the Hamiltonian in~\eqnref{eq:latticeHamiltonian}. Each supercell consists of $M$ sites with onsite potential $V_\alpha$. The particle hops to the site on its right with amplitude $e^{+h}$ and to the site on its left with amplitude $e^{-h}$.
	}
\end{figure}

We consider a one-dimensional real lattice of $N$ supercells, each with $M$ sites, connected by nearest-neighbor coupling (see \figref{Fig:latticeSchematic}). The Hamiltonian is~\cite{hebert2011hatanoNelson}
\begin{equation}\label{eq:latticeHamiltonian}
    H = \sum_{n=1}^N H_n
\end{equation}
where,
\begin{align}
    H_n &=\sum_{\alpha=1}^M V_\alpha \ket{n,\alpha}\bra{n,\alpha}\nonumber\\ 
    &+ \sum_{\alpha=1}^{M-1}\left ( e^{-h}\ket{n,\alpha}\bra{n,\alpha+1} + e^{h}\ket{n,\alpha+1}\bra{n,\alpha}\right)\nonumber\\ 
    &+ e^{-h}\ket{n,M}\bra{n+1,1} + e^{h}\ket{n+1,1}\bra{n,M}.
\end{align}
Here, $V_\alpha$ is the potential energy at each site, $e^h$ is the amplitude for the particle to hop rightwards, and $e^{-h}$ the amplitude to hop leftwards. Periodic boundary conditions are assumed.

Such a Hamiltonian has been used to model vortex lines in type II superconductors~\cite{hatano1996localizationprl} (where $h>0$ signifies a transverse component of the applied magnetic field), polymers chains on periodic substrates~\cite{melkani2021polymers} (where $h>0$ signifies an externally applied shear force), and a variety of other physical systems~\cite{li2021quantized,nelson198biology}. The source of non-Hermiticity in these models is asymmetric coupling instead of gain/loss as in the previous example. 

In the Hermitian limit, $h\to 0$, the eigenvalues of $H$ are real and form $M$ bands. As $h$ increases, it is known that the bands expand into ovals in the complex plane [see \subfigref{fig:latticeBands}{a}]~\cite{hebert2011hatanoNelson,melkani2021polymers}. As the bands expand, two bands may meet each other at a critical value of $h$ closing the bandgap and leading to an insulator-conductor transition. This corresponds to a localization-delocalization transition in the physical system~\cite{hatano1996localizationprl,melkani2021polymers,nelson198biology} and we show below that such a transition is equivalent to pseudo-Hermiticity breaking.

Due to the discrete translation symmetry of the lattice, $H$ can be block diagonalized as $H = \oplus_k \mathcal{H}(k)$ where 
\begin{align}
    \mathcal{H}(k) &=\sum_{\alpha=1}^M V_\alpha \ket{k,\alpha}\bra{k,\alpha}\nonumber\\ 
    &+ \sum_{\alpha=1}^{M-1} \left( e^{-h}\ket{k,\alpha}\bra{k,\alpha+1} + e^{h}\ket{k,\alpha+1}\bra{k,\alpha} \right )\nonumber\\ 
    &+ e^{-h-ik}\ket{k,M}\bra{k,1} + e^{h+ik}\ket{k,1}\bra{k,M},
\end{align}
and $k = \frac{\pi m}{N}$ with $m\in \{-N+2,-N+4,\dots,N-2,N\}$. In matrix form this is (where blanks denote zeros)
\begin{equation}\label{eq:latticeK}
    \mathcal H (k) = \begin{pmatrix}
    V_1 & e^{-h} &  & & & e^{h+ik}\\
    e^{h} & V_2 & e^{-h} &  & & \\
     & e^{h} & V_3 & e^{-h} &  &\\
     & &\ddots&\ddots&\ddots&\\
     & &  & e^h &V_{M-1} & e^{-h}\\
    e^{-h-ik} &  & &  & e^h &V_M
    \end{pmatrix}.
\end{equation}
The eigenvalues $\varepsilon(k)$ of $\mathcal{H}(k)$ correspond to the projections of the $M$ Bloch bands at wave-vector $k$. We will assume that the potential is inversion symmetric, such that $V_1 = V_M, V_2 = V_{M-1,\dots}$, which will simplify the form of the intertwining operator, and restrict ourselves to a lattice that is a band insulator at $h\to 0$ (all bands are separated). For concreteness we will also assume that $M$ is an odd number.

\begin{figure}[htb]
	\includegraphics{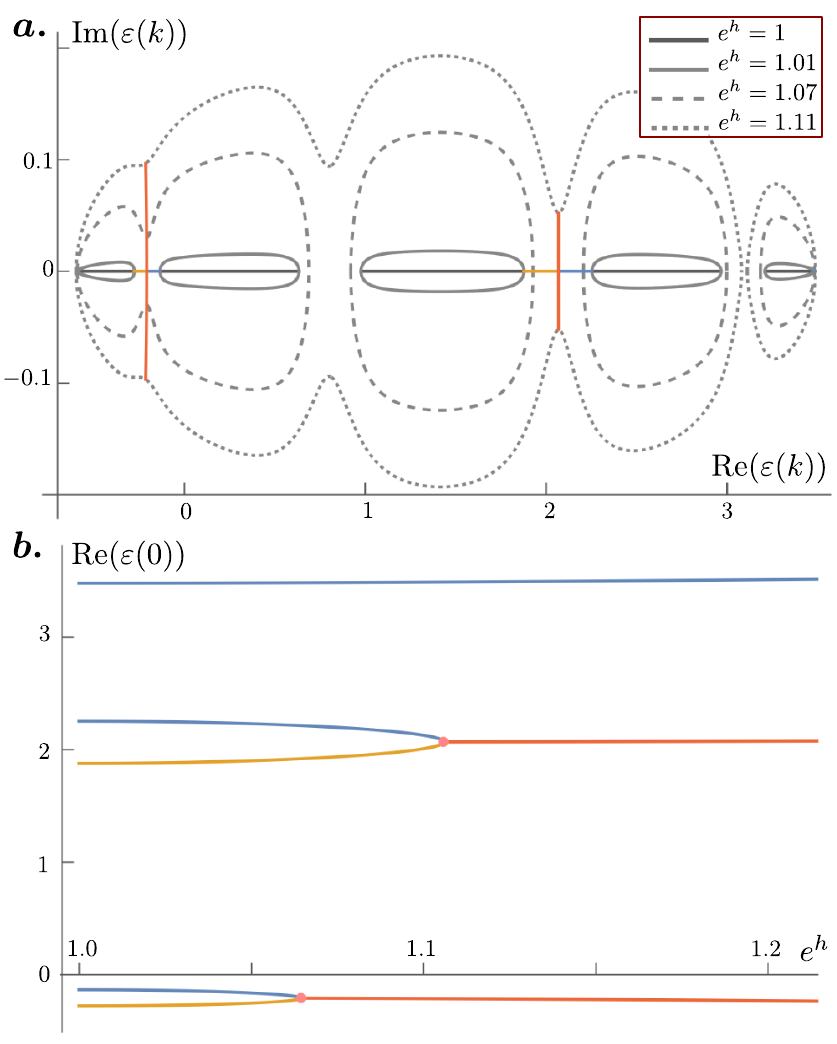}
	\caption{\subfiglabel{a} The eigenvalues of the Hamiltonian in \eqnref{eq:latticeHamiltonian} form M separate oval-shaped bands in the complex plane. Here, we have set $M=5$ and $V_1 = V_5 = 1.4,V_2 = V_3 = 1.2,V_3= 2$. At $h=0$ ($e^h = 1$), the Hamiltonian is Hermitian and the ovals collapse to line segments on the real axis [dark gray continuous lines in \subfiglabel{a}]. As the non-Hermiticity factor $h$ increases, the ovals grow in size and merge with each other (we show the bands at $e^h=1.01, 1.07, 1.11$). The eigenvalues associated with $k\neq 0,\pi$ move off the real axis at arbitrarily small $h$ while those with $k=0$ or $k=\pi$ move along the real axis until a degeneracy of indefinite kind is formed. The trajectories of the eigenvalues of $\mathcal{H}(0)$ [\eqnref{eq:latticeK}] on increasing $h$ are depicted by the continuous curves (real eigenvalues of positive and negative kind are shown in blue and yellow respectively, complex-valued eigenvalues are shown in red). These are also shown in \subfiglabel{b}  where the attraction between eigenvalues of opposite kinds is more apparent. The two symmetry-breaking points shown in pink are both exceptional degeneracies.
	}\label{fig:latticeBands}
\end{figure}

We note that $\mathcal{H}(k)=\mathcal{H}(-k)^*$ (due to the lattice being real valued) and $\mathcal{H}(k) = G^{-1}\mathcal{H}(k)^T G$ (due to the left-right symmetry of the lattice). Here, $G$ is an $M\times M$ antidiagonal matrix with each entry on the antidiagonal being $1$, i.e., 
\begin{equation}
    G = \begin{pmatrix}
    0 & 0 & \dots & 0 & 1\\
    0 & 0 & \dots & 1 & 0\\
    \vdots & \vdots & \iddots & \vdots & \vdots\\
    0 & 1 & \dots & 0 & 0\\
    1 & 0 & \dots & 0 & 0\\
    \end{pmatrix}.
\end{equation}
$G$ has $+1$ as an eigenvalue, repeated $\frac{M+1}{2}$ times, with any symmetric vector, $\{a_1,a_2,\dots,a_2,a_1\}^T$ as an eigenvector. The other eigenvalue is $-1$, repeated $\frac{M-1}{2}$ times, with antisymmetric vectors, $\{a_1,a_2,\dots, a_{\frac{M-1}{2}},0,-a_{\frac{M-1}{2}},\dots,-a_2,-a_1\}^T$ spanning the eigenspace. We use the pseudo-Hermiticity structure arising from these two symmetries to explain the key characteristics of this system.

The swelling of bands into ovals is captured by the matrix $\mathcal{H}(+k)\oplus \mathcal{H}(-k)$ where $k$ is neither $0$ nor $\pi$. For these matrices, one can show that $\Tilde G [\mathcal{H}(+k)\oplus \mathcal{H}(-k)] = [\mathcal{H}(+k)\oplus \mathcal{H}(-k)]^\dagger \Tilde G $ where $\Tilde G$ is a $2M\times 2M$ antidiagonal matrix with each entry on the antidiagonal being $1$. That is,
\begin{equation}
\begin{pmatrix}
    \bigz & G\\
    G & \bigz
\end{pmatrix}
\begin{pmatrix}
    \mathcal{H}(+k) & \bigz\\
    \bigz & \mathcal{H}(-k)
\end{pmatrix}
=
\begin{pmatrix}
    \mathcal{H}(+k) & \bigz\\
    \bigz & \mathcal{H}(-k)
\end{pmatrix}^\dagger
\begin{pmatrix}
    \bigz & G\\
    G & \bigz
\end{pmatrix}.
\end{equation}
$\Tilde G$ has $+1$ as an $M$-fold eigenvalue and $-1$ as the other $M$-fold eigenvalue.

In the Hermitian limit, $h\to 0$, the eigenvalues of $\mathcal{H}(k)$ are real for all values of $k$. The time-reversal symmetry, $\mathcal{H}(+k)=\mathcal{H}(-k)^*$, implies (by Kramers' theorem~\cite{zhou2019topology}) that the eigenvalues of $\mathcal{H}(+k)\oplus \mathcal{H}(-k)$ come in degenerate pairs. At nonzero $h$, Kramers' degeneracy theorem breaks down~\cite{zhang2022kramers} and we may expect a case of \emph{threshold-less} symmetry breaking~\cite{ge2014degeneracy} as in \secref{sec:symmetries} if the degenerate eigenvalues were of opposite kind.

Returning to the Hermitian limit, suppose $\ket{v_i(+k)}$ and $\ket{v_i(-k)}$ are the eigenvectors of $\mathcal{H}(+k)$ and $\mathcal{H}(-k)$, respectively, for a common eigenvalue $\lambda_i$. The superposition $\alpha\ket{v_i(+k)} + \beta\ket{v_i(-k)}$ is then the general eigenvector of $[\mathcal{H}(+k)\oplus \mathcal{H}(-k)]$ corresponding to $\lambda_i$. One can check that the eigenvector with $(\alpha,\beta) = (1,1)$ is of opposite kind to the one with $(\alpha,\beta) = (1,-1)$. Thus, all the degeneracies of $[\mathcal{H}(+k)\oplus \mathcal{H}(-k)]$ are of indefinite kind at $h\to 0$ enabling the eigenvalues to move off the real axis when non-Hermiticity is introduced at arbitrarily small $h$ leading to the swelling of bands into ovals [see \subfigref{fig:latticeBands}{a}].

Now, for $k=0,\pi$, we get $\mathcal{H}(0) = G^{-1}\mathcal{H}(0)^\dagger G$ and $\mathcal{H}(\pi)= G^{-1}\mathcal{H}(\pi)^\dagger G$. Thus, these two matrices have $\frac{M+1}{2}$ eigenvalues of positive kind and the rest of negative kind. In the Hermitian limit, $h\to 0$, the eigenvalues are all real and since $\mathcal{H}(0)$ [as well as $\mathcal{H}(\pi)$] commutes with $G$, its eigenvectors are either symmetric vectors (and of positive kind), or antisymmetric vectors (and of negative kind). Upon turning on the bias, by increasing the value of $h$, while the eigenvectors are no longer constrained to be either symmetric or anti-symmetric, the eigenvalues remain restricted to the real axis in the absence of degeneracies. Numerical investigations suggest that on increasing $h$, the eigenvalues of opposite kinds attract each other and when they meet, and two bands merge at $k=0$, they generically produce exceptional point degeneracies [see \subfigref{fig:latticeBands}{b} where we show the eigenvalues of $\mathcal{H}(0)$].

Whether or not the bands have merged is then captured by whether or not pseudo-Hermiticity breaking has occurred in the matrix $\mathcal{H}(0)\oplus \mathcal{H}(\pi)$. When the symmetry breaking does occur, it leads to a localization to delocalization transition (or insulator to metal transition) in the physical system.

In summary, we showed that the two key characteristics of a lattice with asymmetric coupling---namely, the swelling of bands into ovals at infinitesimal asymmetry and the localization-delocalization transition at a critical value of asymmetry---are captured by pseudo-Hermiticity breaking in the relevant submatrices of the system. We expect these key insights to be valuable in the study of higher dimensional systems~\cite{lehrer1998vortex}, dimerized systems, and systems with disorder~\cite{hebert2011hatanoNelson}.

\subsection{Coupled dissipative oscillators}\label{sec:exa:oscillators}

Harmonic oscillators are ubiquitous in classical physics since they model small fluctuations of a many-body system about its equilibrium configuration. In recent years, a detailed study of the matrix structure of mechanical oscillators revealed rich behavior such as internal symmetries~\cite{Susstrunk2016phonons}, topologically protected boundary modes~\cite{kane2014topological}, exceptional rings~\cite{yoshida2019ring}, etc. Here we consider a system of identical masses subject to an arbitrary harmonic potential and show that the system exhibits two intertwining operators that govern the behavior of the modes.

Consider $n$ coupled classical mechanical oscillators with equal masses (set to $1$). We denote the positions of the oscillators by $\vec x = \{x_1, x_2, \dots, x_n\}^T$ such that the potential energy of the system is $\vec x^T. K .\vec x$ where $K$ is the stiffness matrix. $K$ is real and symmetric with real eigenvalues $\Omega_i^2$, and corresponding eigenvectors $\vec q_i$ satisfying
\begin{equation}
    K \vec q_i = \Omega_i^2 \vec q_i.
\end{equation}

We are interested in the first-order equation
\begin{equation}
i\frac{d}{dt} \begin{pmatrix} \vec x(t)\\ \vec p(t) \end{pmatrix} 
= -i\begin{pmatrix} 
\bigz & -\id \\ 
K &  \gamma \id 
\end{pmatrix}
\begin{pmatrix} \vec x(t) \\ \vec p(t) \end{pmatrix},
\end{equation}
where $\gamma$ is the viscous damping coefficient. The equation above also defines the quantum Hamiltonian
\begin{equation}
    H = -i\begin{pmatrix} \bigz & -\id \\ K & \gamma \id \end{pmatrix}.
\end{equation}
Since the Hamiltonian is time-independent the equation can be solved by substituting $\begin{pmatrix} \vec x (t)\\ \vec p(t) \end{pmatrix} = e^{-i\omega t} \ket{v}$, where $\ket v$ is a time-independent column vector, to get
\begin{equation}
\omega \ket{v} = -i\begin{pmatrix} \bigz & -\id \\ K & \gamma \id \end{pmatrix} \ket{v},
\end{equation}
an eigenvalue equation. The eigenvectors are
\begin{equation}\label{eq:oscillatorvectors}
\ket{v_i^{\pm}} =\begin{pmatrix} \vec q_i \\ -i\omega_{i}^{\pm} \vec q_i \end{pmatrix}
\end{equation}
with eigenvalues
\begin{equation}\label{eq:oscillatorvalues}
    \omega_i^{\pm} = -\frac{i\gamma}{2} \pm \frac{\sqrt{ 4\Omega_i^2 - \gamma^2}}{2}.
\end{equation}

We first consider the case of no dissipation, $\gamma = 0$, such that the eigenvalues are guaranteed to be real, $\omega_i^{\pm} = \pm \Omega_i$. The pseudo-Hermitian symmetries of $H(\gamma = 0)$ are as following.

First, since this is a classical mechanical system the underlying matrix is real-valued. By our choice of notation this implies $H(\gamma = 0)^* = -H(\gamma = 0)$ such that its eigenvalues are either purely imaginary or come in pairs with oppositely signed real parts. Recalling that such a condition is equivalent to (anti-)pseudo-Hermiticity we find that we can write this as $\ga ' H = - H^\dagger \ga '$, where
\begin{equation}
\ga ' =
\begin{pmatrix}
    - \tau K & \id\\
    \id & \tau \id
\end{pmatrix}
\end{equation}
is Hermitian and invertible. Here, $\tau$ is an arbitrary real number.

Second, $H(\gamma = 0)$ is also pseudo-Hermitian, $ \gb H(\gamma = 0) \gb^{-1} = H(\gamma = 0)^\dagger$ with
\begin{equation}\label{eq:symplectic}
    \gb = iJ = i\begin{pmatrix} \bigz & \id \\ -\id & \bigz \end{pmatrix},
\end{equation}
which is Hermitian as well as unitary. This additional pseudo-Hermiticity comes from the equations of motion being derived from Hamilton's equations of motion ~\cite{zhang2016Krein} ($J$ is the symplectic form). Pseudo-Hermiticity implies that eigenvalues are either real or come in complex conjugate pairs. 

Taking the two symmetries together, $H(\gamma = 0)$ either has pairs of oppositely signed eigenvalues that are purely real or purely imaginary, or it has quadruplets of eigenvalues with nonzero real as well as imaginary parts forming the set, $\{\lambda, -\lambda, \lambda^*, -\lambda^*\}$. These symmetries were noted in Ref.~\cite{Susstrunk2016phonons} where they were connected to the time-reversal symmetry and the chiral symmetry respectively.

With dissipation present, it is useful to work with the traceless matrix,
\begin{equation}
    \Tilde H = H +\frac{i\gamma}{2}\mathbb{I}_{2n} = -i\begin{pmatrix} -\frac{\gamma}{2}\id & -\id \\ K & \frac{\gamma}{2}\id \end{pmatrix},
\end{equation}
which has the same eigenvectors as in \eqnref{eq:oscillatorvectors} but eigenvalues shifted to $\Tilde \omega_i^{\pm} = \pm \frac{\sqrt{ 4\Omega_i^2 - \gamma^2}}{2}$. Essentially we have separated away the term governing the total loss of energy of the system and are now working with a matrix with balanced gain and loss. The symmetry due to $iH$ being real generalizes to $\ga \Tilde H = - \Tilde H^\dagger \ga$ with
\begin{equation}
\ga =
\begin{pmatrix}
    \gamma \id - \tau K & \id\\
    \id & \tau \id
\end{pmatrix}.
\end{equation}
$\ga$ is invertible as long as $\omega_i^\pm \tau \neq -i$ for any $\omega_i^\pm$. Since $\ga$ depends explicitly on the parameters of the system, its usefulness is limited as its number of positive and negative eigenvalues change whenever it passes through a noninvertible point. The other symmetry remains the same, $\gb \Tilde H \gb^{-1} = \Tilde H^\dagger$ with $\gb$ as in \eqnref{eq:symplectic}~\cite{yoshida2019ring}.

Since we have two independent intertwining operators (or indeed a continuous family of intertwining operators), symmetry breaking can only occur when the modes meeting each other are of opposite kind with respect to \emph{all} intertwining operators. To find the kind of each eigenspace, first let us compute
\begin{equation}
\langle v_i^{\pm}| \gb| v_i^{\pm}\rangle = 
\begin{pmatrix} \vec q_i^* & +i\omega_{i}^{\pm *} \vec q_i^* \end{pmatrix} \gb
\begin{pmatrix} \vec q_i \\ -i\omega_{i}^{\pm} \vec q_i \end{pmatrix} = 2|\vec q_i|^2\operatorname{Re} \omega_i^{\pm}.
\end{equation}
This evaluates to $0$ for the overdamped case, $4\Omega_i^2 \leq \gamma^2$. For the underdamped case, $4\Omega_i^2 > \gamma^2$, we see that $\langle v_i^{+}| \gb| v_i^{+}\rangle$ is of positive kind while $\langle v_i^{-}| \gb| v_i^{-}\rangle$ is of negative kind. Physically, the modes are distinguished by whether the momenta are lagging behind the positions or are ahead of them. 

Meanwhile,
\begin{align}
    \langle v_i^{\pm}| \ga| v_i^{\pm}\rangle &= 
    \begin{pmatrix} \vec q_i^* & +i\omega_{i}^{\pm *} \vec q_i^* \end{pmatrix} \ga
\begin{pmatrix} \vec q_i \\ -i\omega_{i}^{\pm} \vec q_i \end{pmatrix}\nonumber \\
&= \left (|\omega_i^\pm |^2 \tau - \Omega_i^2 \tau + 2 \operatorname{Im} \omega_i^\pm + \gamma \right) |\vec q_i|^2.
\end{align}
For the underdamped case, $4\Omega_i^2 > \gamma^2$, this evaluates to $0$. For the overdamped case, $4\Omega_i^2 \leq \gamma^2$, it evaluates to
\begin{equation}
    \langle v_i^{\pm}| \ga| v_i^{\pm}\rangle = \left( \frac{\gamma^2 - 4\Omega_i^2}{2} \pm \frac{\sqrt{\gamma^2 - 4\Omega_i^2}}{2}(2- \gamma \tau) \right )|\vec q_i|^2.
\end{equation}
To yield the strongest conditions we choose $\tau = 2/\gamma$ such that both the modes are of positive kind as long as $4\Omega_i^2 < \gamma^2$.

Let us see what the combination of symmetries implies. If we start from $\gamma=0$ (when $\omega_i^\pm = \pm \Omega_i$), the conditions due to $\ga$ yield that all modes are of indefinite kind and are not very useful. $\gb$, on the other hand, shows that that the positive and negative frequencies are of opposite kind. They can become complex valued only if they meet at zero modes. Indeed these zero modes (or floppy modes) govern the instability of mechanical systems~\cite{kane2014topological}. These modes have been a subject of interest since for certain lattices they arise from a topological origin and are localized at the boundary and insensitive to local perturbations~\cite{kane2014topological}.

On increasing $\gamma$, eigenvalues of opposite kind (opposite according to $\gb$) meet each other if $\Omega_i = \gamma/2$, a condition known as critical damping. The exact solution for the eigenvalues in \eqnref{eq:oscillatorvalues} shows that this pseudo-Hermiticity breaking occurs at an exceptional point. For mechanical lattices, these exceptional points can form exceptional rings due to rotational symmetry~\cite{yoshida2019ring}. On increasing $\gamma$ further, the eigenvalues are now  of positive kind with respect to $\ga$ and thus cannot wander freely in the complex plane, i.e.,  $\omega_i^\pm + i\frac{\gamma}{2}$ is constrained to be on the imaginary axis.

In summary, many properties of coupled oscillators are captured by pseudo-Hermiticity and these ideas may be useful in lattices and time-dependent systems and even nonlinear systems~\cite{weis2022nonlinear} where exact solutions are intractable.

 
\section{Conclusions and discussion}\label{sec:conclusion}

In summary, we showed that each eigenvalue/eigenspace of a pseudo-Hermitian matrix, $H$ such that $H = G^{-1} H^\dagger G$, can be classified into three kinds according to the sign of $\expect{v|G|v}$: positive, negative, or indefinite. Real nondegenerate eigenvalues of a parametrized matrix $H(x)$ are either of positive kind or of negative kind, and as they wander along the real axis, on the variation of the parameter, $x$, these eigenvalues can turn into exceptional point degeneracies and/or split into complex conjugate pairs if and only if they meet a real eigenvalue of opposite kind. This then enables one to predict the occurrence of exceptional points and points of pseudo-Hermiticity breaking. 

On the basis of this classification, we also showed that the parameter space of a pseudo-Hermitian matrix exhibits topologically disconnected regions where all the eigenvalues of the matrix are real --- which in many cases correspond to distinct stable phases in physical systems. These regions are surrounded by exceptional surfaces which comprise all possible real-valued EPs of pseudo-Hermitian matrices. Exceptional surfaces that are boundaries to two different regions may meet annihilating each other and giving rise to DPs. These exceptional surfaces together with the DPs created by their intersections comprise all points of pseudo-Hermiticity breaking.

We also showed how the intertwining operator, $G$, gives rise to a conserved quantity and derived the conditions for when degeneracies caused by external symmetries are susceptible to thresholdless pseudo-Hermiticity breaking. We illustrated our results with examples from different branches of physics.

The topological ideas in this paper contribute to the broader study of non-Hermitian topological phenomena such as  symmetry-protected topological phases, nodal phases~\cite{budich2019nodal, staalhammar2021exceptionalNodal}, the graph topology of spectra~\cite{tai2022graph}, etc. It would be interesting to investigate if the results of this paper can be generalized and applied to other symmetry classes of non-Hermitian matrices.  A comprehensive study of the interplay of external symmetries and pseudo-Hermiticity, and the application of this work to the study of random non-Hermitian matrices~\cite{xiao2022levelstatistics,metz2019spectral} and to time-dependent systems~\cite{chang2019bifurcation} are all interesting directions. Investigating the response strength of DPs and EPs at pseudo-Hermiticity breaking points is also an interesting direction~\cite{wiersig2022response}. We leave these questions to future work.

\section{Acknowledgements}

I thank Jayson Paulose and Wenqian Sun for reading an earlier version of the manuscript and providing feedback and suggestions. I also thank Koppara Philip Thomas for useful discussion. Finally I thank the UO Libraries and Open Access
Article Processing Charge Fund Committee for their contribution to the payment of the article processing charge. Work was
partially supported by the National Science Foundation under
Grant No. DMR-2145766.

\bibliography{references}

\appendix

\section{Proofs for statements in \secref{sec:structure}}\label{app:miniProofs}

In this section, $U = e^{-iH}$ such that $G = U^\dagger G U$ [see \eqnref{eq:eqnU}].

\begin{lemma}\label{lemma1}
Let $\lambda$ be an eigenvalue of $H$ such that $\operatorname{Im}\lambda \neq 0$ and $\ket{v}$ a corresponding eigenvector. Then, $\expect{v|G|v} = 0$.
\end{lemma}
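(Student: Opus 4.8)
The plan is to exploit the defining relation \eqnref{eq:defnPseudo} directly, bypassing the propagator. Rewrite $H = G^{-1}H^\dagger G$ as $GH = H^\dagger G$ (legitimate since $G$ is invertible). Then sandwich this operator identity between $\bra{v}$ and $\ket{v}$, using that $\ket{v}$ is a right eigenvector of $H$ with eigenvalue $\lambda$ and hence, by taking the conjugate transpose of $H\ket{v} = \lambda\ket{v}$, that $\bra{v} H^\dagger = \lambda^* \bra{v}$.

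Concretely, I would compute $\bra{v} G H \ket{v}$ two ways. On one hand, letting $H$ act to the right, $\bra{v}GH\ket{v} = \lambda \expect{v|G|v}$. On the other hand, using $GH = H^\dagger G$ and letting $H^\dagger$ act to the left, $\bra{v} H^\dagger G \ket{v} = \lambda^* \expect{v|G|v}$. Equating the two gives $(\lambda - \lambda^*)\expect{v|G|v} = 0$. Since $\operatorname{Im}\lambda \neq 0$ we have $\lambda \neq \lambda^*$, so the prefactor is nonzero and we conclude $\expect{v|G|v} = 0$.

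An equivalent route, consistent with the remark that $G = U^\dagger G U$ for $U = e^{-iH}$, is to note that $\ket{v}$ is an eigenvector of $U$ with eigenvalue $e^{-i\lambda}$, so $\expect{v|G|v} = \expect{v|U^\dagger G U|v} = e^{i\lambda^*}e^{-i\lambda}\expect{v|G|v} = e^{2\operatorname{Im}\lambda}\expect{v|G|v}$; since $\operatorname{Im}\lambda \neq 0$ the scalar $e^{2\operatorname{Im}\lambda} \neq 1$, forcing $\expect{v|G|v} = 0$. Either way the argument is a one-line algebraic manipulation, so there is no real obstacle; the only point requiring a moment's care is the Hermiticity/invertibility of $G$ (already assumed throughout) and the passage $H\ket{v} = \lambda\ket{v} \Rightarrow \bra{v}H^\dagger = \lambda^*\bra{v}$, which is just conjugate transposition. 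I would present the first version as the proof, since it uses only \eqnref{eq:defnPseudo} and no properties of the exponential.
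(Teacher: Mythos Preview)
Your proposal is correct. Your second route—using $U=e^{-iH}$ and $G=U^\dagger G U$ to obtain $\expect{v|G|v}=e^{2\operatorname{Im}\lambda}\expect{v|G|v}$—is exactly the paper's proof. Your preferred first route, sandwiching $GH=H^\dagger G$ directly to get $(\lambda-\lambda^*)\expect{v|G|v}=0$, is a slightly more elementary variant: it avoids exponentiating $H$ and needs only the intertwining relation \eqnref{eq:defnPseudo} itself. The paper likely phrases things via $U$ because the identity $G=U^\dagger G U$ is reused in the proof of Lemma~\ref{lemma2} (where the generalized-eigenvector relation for $U$ is convenient) and ties into the conserved-quantity discussion in \secref{sec:symmetries}; but for Lemma~\ref{lemma1} alone your direct version is cleaner and entirely equivalent.
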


\begin{proof} We have
\begin{align}
\expect{v|G|v} &= \expect{v|U^\dagger G U|v}\nonumber\\
&= \expect{v|e^{+i\lambda^*} G e^{-i\lambda}|v}\nonumber\\
&= e^{2 \text{Im}\lambda}\expect{v|G|v}.
\end{align}

Since, $e^{2 \text{Im}\lambda} \neq 1$ we have $\expect{v|G|v}$ = 0.
\end{proof}

\begin{lemma}\label{lemma2}
Let $\lambda$ be a real eigenvalue of $H$ with geometric multiplicity less than algebraic multiplicity (an exceptional point). Then, $\expect{v|G|v} = 0$ for some corresponding eigenvector $\ket{v}$.
\end{lemma}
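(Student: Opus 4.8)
The plan is to exploit the fact that an exceptional point of $\lambda$ means a nontrivial Jordan block, and to read off the vanishing of $\expect{v|G|v}$ from a single ``off-diagonal'' matrix element of $G$ together with the pseudo-Hermiticity relation, which in the form used in this section is the invariance $G = U^\dagger G U$ of \eqnref{eq:eqnU}. Concretely, since the geometric multiplicity of $\lambda$ is strictly smaller than its algebraic multiplicity, $H$ has at least one Jordan block of size $\ge 2$ associated with $\lambda$, so there exist an eigenvector $\ket{v}$ with $H\ket{v} = \lambda\ket{v}$ and a generalized eigenvector $\ket{w}$ with $(H-\lambda)\ket{w} = \ket{v}$, i.e.\ $H\ket{w} = \lambda\ket{w} + \ket{v}$. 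This particular $\ket{v}$ (the eigenvector at the bottom of the chain) is the one for which I will establish $\expect{v|G|v} = 0$; one cannot expect this for every eigenvector in the eigenspace, which is why the statement only asserts it for \emph{some} $\ket{v}$.

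Next I would evaluate the invariant quantity $\expect{v|G|w} = \expect{v|U^\dagger G U|w}$. The subspace $\mathrm{span}\{\ket{v},\ket{w}\}$ is $H$-invariant, hence $U$-invariant, and $H-\lambda$ acts on it as a nilpotent of square zero; therefore $U\ket{v} = e^{-i\lambda}\ket{v}$ and $U\ket{w} = e^{-i\lambda}\bigl(\ket{w} - i\ket{v}\bigr)$. Taking the adjoint and using that $\lambda$ is real, $\bra{v}U^\dagger = e^{i\lambda}\bra{v}$, so $\expect{v|U^\dagger G U|w} = \expect{v|G|w} - i\expect{v|G|v}$; equating this with $\expect{v|G|w}$ forces $\expect{v|G|v} = 0$. (Equivalently, without exponentials: sandwich $GH = H^\dagger G$ between $\bra{v}$ and $\ket{w}$; on the right $\bra{v}H^\dagger = \overline{\lambda}\,\bra{v} = \lambda\bra{v}$, on the left $H\ket{w} = \lambda\ket{w} + \ket{v}$, and the two copies of $\lambda\,\expect{v|G|w}$ cancel, leaving $\expect{v|G|v} = 0$.)

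I do not anticipate a genuine obstacle here, since the computation is short; the point that must be handled carefully is where the reality of $\lambda$ enters. It is exactly what makes the $\bra{v}H^\dagger$ factor (or the $\bra{v}U^\dagger$ factor) reproduce $\lambda$ rather than $\overline{\lambda}$, so that the off-diagonal terms cancel; for $\operatorname{Im}\lambda \neq 0$ the same manipulation instead produces a multiplicative factor $e^{2\operatorname{Im}\lambda} \neq 1$, which is the mechanism behind Lemma~\ref{lemma1}. The only other thing to state cleanly is the existence, at an exceptional point, of the pair $\ket{v},\ket{w}$ with $(H-\lambda)\ket{w}=\ket{v}$, which is immediate from the Jordan normal form.
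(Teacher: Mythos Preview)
Your proposal is correct and is essentially identical to the paper's proof: both pick a Jordan chain $\ket{v},\ket{w}$ with $(H-\lambda)\ket{w}=\ket{v}$, use the invariance $G=U^\dagger G U$ to evaluate $\expect{v|G|w}$, and read off $\expect{v|G|v}=0$ from the cancellation of the $\expect{v|G|w}$ terms (using $\lambda\in\mathbb{R}$). Your parenthetical variant via $GH=H^\dagger G$ is just the infinitesimal version of the same computation and is equally valid.
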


\begin{proof}Since the algebraic multiplicity of $\lambda$ is greater than its geometric multiplicity, we can define at least two linearly independent vectors $\ket{v}$ and say $\ket{w}$ such that
\begin{align}
H\ket{v} = \lambda \ket{v} \quad &\text{or} \quad U\ket{v} = e^{-i\lambda} \ket{v}, \text{ and} \label{eq:exceptional1}\\
H\ket{w} = \lambda \ket{w} + \ket{v} \quad &\text{or} \quad U\ket{w} = e^{-i\lambda} \ket{w} -ie^{-i\lambda} \ket{v}. \label{eq:exceptional2}
\end{align}

$\ket{w}$ is called a generalized eigenvector and satisfies $(H-\lambda \mathbb{I})^k\ket{w} = 0$ with $k>1$. [The second part of \eqnref{eq:exceptional2} follows from $H^n\ket{w} = \lambda^n \ket{w} + n\lambda^{n-1}\ket{v}$, which can be proven by mathematical induction.] Then,
\begin{align}\label{exceptionalProof}
\expect{v|G|w} &= (\expect{v|U^\dagger) G (U|w})\nonumber\\
&= e^{+i\lambda^*}\bra{v}G(e^{-i\lambda} \ket{w} -ie^{-i\lambda} \ket{v})\nonumber\\
&= \expect{v|G|w} -i\expect{v|G|v},
\end{align}
implying $\expect{v|G|v}$ is zero.
\end{proof}

\section{Proof of Krein-Gel'fand-Lidskii Theorem}\label{app:KGL}

Let $H(\bs k)$ be a parameterized pseudo-Hermitian matrix such that $G(\bs k)H(\bs k) = H^\dagger(\bs k)G(\bs k)$. $H(\bs k)$ and $G(\bs k)$ are continuous functions of the parameter(s) $\bs k$, and $G(\bs k)$ is Hermitian and invertible for all values of $\bs k$. Note that the case of $G(\bs k) = G$ being constant is automatically covered as a special case.

\subsection{Proof of sufficiency} 

To show that if $\expect{v|G(\bs k = \bs k_0)|v}\neq 0$ for all eigenvectors associated with $\lambda_0$, a real eigenvalue of $H(\bs k = \bs k_0)$, then $\lambda_0$ stays real and diabolic upon small perturbations of $\bs k$.

\begin{proof}Suppose the contrary. Then there exists a sequence of matrices, $H_1, H_2, \dots \to H_0 = H(\bs k_0)$ and $G_1, G_2, \dots \to G_0 = G(\bs k_0)$ such that an eigenvalue $\lambda_m$ of $H_m$ has either non-zero imaginary part or algebraic multiplicity strictly more than geometric multiplicity. In either case, $\expect{v_m | G_m | v_m} = 0$ where $H_m \ket{v_m} = \lambda_m \ket{v_m}$ and $G_m H_m = H^\dagger_m G_m$.

While the sequence $\ket{v_1}, \ket{v_2}, \dots \to \ket{v_0}$ may not be convergent, we can select a subsequence such that $\ket{v_{m_j}}\to \ket{v}$ as $m_j\to\infty$. By suitably changing the notation, we have $\ket{v_m}\to\ket{v}$ as $m\to\infty$. Letting $m\to\infty$ in the above equalities, we obtain
\begin{equation}
    H_0 \ket{v} = \lambda_0 \ket{v} \quad \text{with } \expect{v|G_0|v} = 0.
\end{equation}

But we assumed $\expect{v|G_0|v}\neq 0$ for all eigenvectors associated with $\lambda_0$.
\end{proof}

This proof closely follows the one given in Chapter III of Ref.~\cite{YakubovichStarzhinskii}. An alternate proof can be found in Ref.~\cite{coppel1965stability}.

\subsection{Proof of necessity} 

To show that if $H(\bs k = \bs k_0)$ has real eigenvalues and $\expect{v|G(\bs k = \bs k_0)|v} = 0$ for some eigenvector associated to an eigenvalue $\lambda$, then it is possible to perturb $\bs k$ in such a way that $\lambda$ splits into complex conjugate eigenvalues.

\begin{proof}

We write $H(\bs k = \bs k_0) := H(0)$ in its Jordan normal form, $H(0) = P J P^{-1}$. The columns of the matrix $P$ are the generalized right eigenvectors of $H(0)$ which we denote by $\ket{R_i}$. The rows of the matrix $P^{-1}$ are the generalized left eigenvectors of $H(0)$, which we denote by $\bra{L_i}$. Evidently, $\expect{L_i|R_j} = \delta_{ij}$ since $P^{-1}P$ is the identity. We will work with this biorthonormal basis.

\textit{Case of diabolic point.} We consider the case of a diabolic point at $\lambda$ such that $H(0)$ is diagonalizable at least in the associated root subspace. This case will, in particular, also prove that an exceptional point is not necessary for symmetry breaking to occur. 

For simplicity we will assume that the degeneracy in $\lambda$ is twofold,
\begin{equation}
H(0)=\left (\lambda\ket{R_1}\bra{L_1} + \lambda\ket{R_2}\bra{L_2}\right ) \oplus \Tilde H,
\end{equation}
where $\Tilde H$ is the projection of the matrix in the rest of the space. A general Hermitian matrix can be expressed in terms of the left eigenvectors of $H$ as $M =\sum_{i,j}m_{ij}\ket{L_i}\bra{L_j}$. We claim that $G$ can be written as
\begin{align}
G&=\left (\eta_1\ket{L_1}\bra{L_1} + \eta_2\ket{L_2}\bra{L_2}\right ) \oplus \Tilde G\\
:&=G_\lambda\oplus \Tilde G,
\end{align}
where $\Tilde G$ is the projection of $G$ in the subspace spanned by $\{\ket{L_3},\ket{L_4},\dots \}$.

To show that there are no cross terms such as $\ket{L_i}\bra{L_1}$ or $\ket{L_i}\bra{L_2}$, with $i>2$, in the expression above, we need to show that $\expect{R_i|G|R_1} = \expect{R_i|G|R_2} = 0$ for $i>2$. If $\ket{R_i}$ is an eigenvector, we have
\begin{align}
\expect{R_i|G|R_1} &= \expect{R_i|U(t)^\dagger G U(t)|R_1}\nonumber\\
&= \expect{R_i|e^{+i\lambda_i^* t} G e^{-i\lambda t}|R_1}\nonumber\\
&= e^{i(\lambda_i^*-\lambda)t}\expect{R_i|G|R_1}.
\end{align}
Since $\lambda_i \neq \lambda$, $e^{i(\lambda_i^*-\lambda)t} \neq 1$ at all times $t$, and thus $\expect{R_i|G|R_1} = 0$. Similarly, $\expect{R_i|G|R_2} = 0$. If on the other hand, $\ket{R_i}$ is a generalized eigenvector we proceed similar to Lemma \ref{lemma2}.

We can now choose $\ket{L_1}$ and $\ket{L_2}$ appropriately to diagonalize $G$ in this subspace. Since $G$ is Hermitian and invertible and $\expect{R|G|R} = 0$ for some eigenvector associated with $\lambda$, we must have $\eta_1 = \expect{R_1|G|R_1}$ and $\eta_2 = \expect{R_2|G|R_2}$ being real, nonzero, and of opposite signs. 

The proof is now simply done by construction. A Hamiltonian which is pseudo-Hermitian with respect to $G_\lambda$, $H^\dagger G_\lambda = G_\lambda H$, is
\begin{align}
H &= a\ket{R_1}\bra{L_1} + \eta_2(b+ic)\ket{R_1}\bra{L_2}\nonumber\\
&+ \eta_1(b-ic)\ket{R_2}\bra{L_1} + d\ket{R_2}\bra{L_2}
\end{align}
with $a,b,c,d$ being arbitrary real constants. We now set $a=d=\lambda\cos(x), b=\frac{\lambda\sin(x)}{\sqrt{|\eta_1 \eta_2|}}, c=0$ to get
\begin{align}
    H(x) &= \lambda \bigg (\cos(x)\ket{R_1}\bra{L_1} + \eta_2\frac{\sin(x)}{\sqrt{|\eta_1 \eta_2|}}\ket{R_1}\bra{L_2}\nonumber\\
&+ \eta_1\frac{\sin(x)}{\sqrt{|\eta_1 \eta_2|}}\ket{R_2}\bra{L_1} + \cos(x)\ket{R_2}\bra{L_2}\bigg )
    \oplus \Tilde{H},
\end{align}
which has eigenvalues $\lambda e^{\pm ix}$ in the relevant subspace. These eigenvalues are complex-valued for real $x$ and at $x = 0$ we recover the original Hamiltonian, $H(0)$.

\textit{Case of exceptional point.} The proof is similar when there is an exceptional point at $\lambda$, a real eigenvalue of $H(0)$. For simplicity we will again assume that the degeneracy in $\lambda$ is twofold,
\begin{equation}
H(0)=\left (\lambda\ket{R_1}\bra{L_1} + \lambda\ket{R_2}\bra{L_2}+ \ket{R_1}\bra{L_2}\right ) \oplus \Tilde H.
\end{equation}
Here, $\ket{R_1}$ is a right eigenvector of $H(0)$ and $\ket{R_2}$ a generalized right eigenvector. 

$G$ can again be made block-diagonal in the left eigenvector basis (see the diabolic case). We now use Eqs. \eqref{eq:exceptional1} and \eqref{eq:exceptional2} to show that $\expect{R_1|G|R_2} = \expect{R_2|G|R_1}$,
\begin{align}
    \expect{R_2|G|R_2} &= \expect{R_2|U^\dagger G U |R_2}\nonumber \\
    &= \left (e^{i\lambda} \bra{R_2} +ie^{i\lambda} \bra{R_1}\right )G\left (e^{-i\lambda} \ket{R_2} -ie^{-i\lambda} \ket{R_1}\right )\nonumber\\
    &= \expect{R_2|G|R_2} + i\expect{R_1|G|R_2} - i\expect{R_2|G|R_1}.
\end{align}

Since $G_\lambda$ is Hermitian and invertible, and $\expect{R_1|G|R_1} = 0$ by Lemma \ref{lemma2}, we must have
\begin{equation}
    G_\lambda = \eta_1\ket{L_2}\bra{L_1} + \eta_1 \ket{L_1}\bra{L_2} +\eta_2 \ket{L_2}\bra{L_2},
\end{equation}
where $\eta_1 = \expect{R_1|G|R_2}$  and $\eta_2 = \expect{R_2|G|R_2}$ are real and $\eta_1$ is nonzero. A Hamiltonian which is pseudo-Hermitian with respect to $G_\lambda$, $H^\dagger G_\lambda = G_\lambda H$, is
\begin{align}
H &= \left ((a-ib)\eta_1 - c\eta_2\right )\ket{R_1}\bra{L_1} + (d\eta_1-ib\eta_2)\ket{R_1}\bra{L_2}\nonumber\\
&+ c\eta_1\ket{R_2}\bra{L_1} + (a+ib)\eta_1\ket{R_2}\bra{L_2},
\end{align}
with $a,b,c,d$ being arbitrary real constants. We now set $a=\frac{\lambda\cos(x)}{\eta_1}, b=\frac{\lambda\sin(x)}{\eta_1}, c=0,d=\frac{\cos(x)}{\eta_1}$ to get
\begin{align}
H(x) &= \lambda e^{-ix}\ket{R_1}\bra{L_1} + \bigg(\cos(x)-\frac{i\lambda\eta_2\sin(x)}{\eta_1}\bigg )\ket{R_1}\bra{L_2}\nonumber\\
&+\lambda e^{+ix}\ket{R_2}\bra{L_2},
\end{align}
which has eigenvalues $\lambda e^{\pm ix}$ in the relevant subspace.
\end{proof}

The complete proof covering $r$-fold degeneracies is similar and can be found in Chapter III of Ref.~\cite{YakubovichStarzhinskii} and in Ref.~\cite{coppel1965stability}.

\section{Continuity of the kind of eigenvalues}\label{app:continuity}

Let $H(x)$ be a parameterized pseudo-Hermitian matrix such that $G(x)H(x) = H^\dagger(x)G(x)$, where $H(x)$ and $G(x)$ are continuous functions of $x$, and $G(x)$ is Hermitian and invertible. Here we show that eigenvalues of $H(x)$, which are of positive (or negative) kind, retain their kind on the variation of the parameter $x$. For accounting purposes, an eigenvalue $\lambda$ with $\operatorname{Im} \lambda > 0$ will be considered of positive kind, and if $\operatorname{Im} \lambda < 0$, it is of negative kind. We will assume the following Lemma the proof of which can be found in Ref.~\cite{YakubovichStarzhinskii}.

\begin{lemma}
If $G(x)$ has $p$ positive eigenvalues and $q$ negative eigenvalues then $H(x)$ has $p$ eigenvalues of positive kind and $q$ eigenvalues of negative kind (after counting multiplicities). The converse is also true.
\end{lemma}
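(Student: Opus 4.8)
The plan is to recognise that $H$ being pseudo‑Hermitian with respect to $G$ is the same as $H$ being self‑adjoint with respect to the (generically indefinite) Hermitian form $[v,w]:=\expect{v|G|w}$, and then to compute the inertia (signature) of $G$ by decomposing $\mathbb{C}^N$ into the generalized eigenspaces of $H$ and tracking the contribution of each block.

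First I would note that $GH=H^\dagger G$ forces $A:=GH$ to be Hermitian, and hence $[Hv,w]=(Hv)^\dagger G w = v^\dagger H^\dagger G w = v^\dagger G H w =[v,Hw]$, i.e.\ $H$ is $[\cdot,\cdot]$‑self‑adjoint; by hypothesis the inertia of $G$ is $(p,q)$, and the kind of an eigenvalue is exactly the sign of $[v,v]$ for $v$ in its eigenspace. Next, write $\mathbb{C}^N=\bigoplus_\lambda \mathcal{R}_\lambda$ as the sum of generalized eigenspaces of $H$, and use $[Hv,w]=[v,Hw]$ together with $H|_{\mathcal{R}_\lambda}=\lambda+N_\lambda$ ($N_\lambda$ nilpotent) — or equivalently the $G$‑unitarity $U^\dagger G U=G$ of $U=e^{-iH}$, exactly as in Lemmas \ref{lemma1}--\ref{lemma2} — to show by a short induction on the nilpotency index that $\mathcal{R}_\lambda\perp_G\mathcal{R}_\mu$ whenever $\bar\lambda\neq\mu$. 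Two consequences: since $G$ is nonsingular, each real block $\mathcal{R}_\lambda$ and each conjugate‑pair block $\mathcal{R}_\lambda\oplus\mathcal{R}_{\bar\lambda}$ (for $\operatorname{Im}\lambda\neq0$) is $[\cdot,\cdot]$‑nondegenerate, and for $\operatorname{Im}\lambda\neq0$ the subspace $\mathcal{R}_\lambda$ is $[\cdot,\cdot]$‑isotropic.

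Then, by Sylvester's law of inertia, the inertia of $G$ is the sum of the inertias of $[\cdot,\cdot]$ restricted to these $[\cdot,\cdot]$‑orthogonal nondegenerate blocks. A conjugate‑pair block $\mathcal{R}_\lambda\oplus\mathcal{R}_{\bar\lambda}$ is the nondegenerate pairing of two isotropic subspaces, so (standard linear algebra) they have equal dimension $d_\lambda$ and the block has inertia $(d_\lambda,d_\lambda)$; with the accounting convention fixed just above the Lemma (an eigenvalue with $\operatorname{Im}\lambda>0$ counts as positive kind) this contributes $d_\lambda$ positive‑kind and $d_\lambda$ negative‑kind eigenvalues. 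A real eigenvalue $\lambda$ of positive (resp.\ negative) kind is necessarily semisimple — an exceptional point is of indefinite kind by Lemma \ref{lemma2} — so $\mathcal{R}_\lambda$ is the genuine eigenspace and $[\cdot,\cdot]$ is positive‑ (resp.\ negative‑) definite on it, contributing $(\mathrm{mult},0)$ resp.\ $(0,\mathrm{mult})$. Summing over blocks gives $p=\sum_{\text{positive kind}}\mathrm{mult}$ and $q=\sum_{\text{negative kind}}\mathrm{mult}$, which is the claim; and because this is an equality $(p,q)=\sum_\lambda\operatorname{inertia}\!\big([\cdot,\cdot]\big|_{\mathcal{R}_\lambda}\big)$ rather than an inequality, the eigenvalue counts of $H$ conversely pin down the inertia of $G$, which is the converse.

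The hard part will be the bookkeeping at non‑semisimple eigenvalues: propagating the $G$‑orthogonality and isotropy statements from bare eigenvectors to full generalized eigenspaces through the nilpotent parts (this is precisely what the $e^{-iH}$ trick of the appendix lemmas accomplishes), and being careful that a real eigenvalue which is an exceptional point is of indefinite kind and does not split into positive‑ and negative‑kind pieces — so that the literal statement of the Lemma is the one valid when every real eigenvalue is definite, with $(p,q)=\sum_\lambda\operatorname{inertia}\!\big([\cdot,\cdot]\big|_{\mathcal{R}_\lambda}\big)$ serving as the natural refinement in general.
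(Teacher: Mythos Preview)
The paper does not actually prove this lemma; it explicitly defers to Ref.~\cite{YakubovichStarzhinskii} (``We will assume the following Lemma the proof of which can be found in Ref.~\cite{YakubovichStarzhinskii}''). Your proposal is the standard indefinite-inner-product (Krein-space) argument---block-decompose into root subspaces, show they are $G$-orthogonal for $\bar\lambda\neq\mu$, and sum inertias via Sylvester---which is precisely the route taken in that reference, so there is no meaningful comparison to make with the paper's own reasoning.

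One remark on the caveat you raise at the end: your inertia-sum identity $(p,q)=\sum_\lambda\operatorname{inertia}\!\big([\cdot,\cdot]\big|_{\mathcal{R}_\lambda}\big)$ is in fact the intended reading of the Lemma in full generality, not just a refinement for the definite case. The paper's Lemma~\ref{lem:pq} makes this explicit: when $p$ positive-kind and $q$ negative-kind eigenvalues meet at a single real $\lambda$, the block $GP$ has inertia $(p,q)$, so an indefinite real eigenvalue is \emph{counted} as contributing its block's inertia to both tallies. With that accounting, your argument covers all cases (including real exceptional points) without needing to single out the definite case as special.
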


A corollary of this is the following (with the proof in Ref.~\cite{YakubovichStarzhinskii}). 

\begin{lemma}\label{lem:pq}
Let $p$ eigenvalues of $H(x)$ of positive kind and $q$ eigenvalues of negative kind meet at $\lambda$. Let $P$ be the eigenprojection operator associated with $\lambda$ (which projects any vector to the root subspace of $\lambda$)~\cite{ashida2020nonhermitian}. Then $G(x) P$ has $p$ positive eigenvalues and $q$ negative eigenvalues. The converse is also true.
\end{lemma}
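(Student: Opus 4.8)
The plan is to localize the problem to the root subspace $\mathcal{M}=\operatorname{Ran}P$ of $\lambda$ and then invoke the counting Lemma stated just above. On $\mathcal{M}$ the pair $(H,G)$ restricts to a genuine pseudo-Hermitian pair of size $p+q$ whose intertwining form is, up to its trivial kernel, exactly the operator $G(x)P$; the signature of that small form is what the counting Lemma will identify with $(p,q)$.

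First I would establish the algebraic input: $\operatorname{Ran}P$ and $\ker P$ are orthogonal with respect to the Hermitian form $\langle\cdot|G|\cdot\rangle$, and the restriction $G_\lambda$ of this form to $\mathcal{M}$ is non-degenerate. Writing $P=\frac{1}{2\pi i}\oint_\Gamma (z\mathbb{I}-H)^{-1}\,dz$ with $\Gamma$ a small loop around $\lambda$, symmetric about the real axis and enclosing no other eigenvalue of $H$ (hence, since $\lambda$ is real, no other eigenvalue of $H^\dagger$), the relation $H^\dagger=GHG^{-1}$ gives $G(z\mathbb{I}-H)^{-1}G^{-1}=(z\mathbb{I}-H^\dagger)^{-1}$, so $GPG^{-1}$ is the spectral projection of $H^\dagger$ for the interior of $\Gamma$; by the reflection symmetry of that region this equals $P^\dagger$. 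Hence $GP=P^\dagger G=(GP)^\dagger$, and multiplying $GP=P^\dagger G$ on the right by $\mathbb{I}-P$ gives $P^\dagger G(\mathbb{I}-P)=0$, i.e. $\langle u|G|v\rangle=0$ for $u\in\operatorname{Ran}P$, $v\in\ker P$. Invertibility of $G$ on the whole space then forces $G_\lambda$ non-degenerate, so $G(x)P$ is a Hermitian matrix with exactly $\operatorname{rank}P=p+q$ nonzero eigenvalues and its signature coincides with that of $G_\lambda$. (The same block decomposition can be produced directly in a generalized-eigenvector basis, exactly as in \appref{app:KGL}, bypassing the contour integral.)

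Next I would check that $(H|_{\mathcal{M}},G_\lambda)$ is itself pseudo-Hermitian: $\mathcal{M}$ is $H$-invariant, and restricting the identity $\langle u|H^\dagger G|v\rangle=\langle u|GH|v\rangle$ to $u,v\in\mathcal{M}$ yields $\langle Hu|G_\lambda|v\rangle=\langle u|G_\lambda|Hv\rangle$. Now I carry the whole construction with the parameter: near the collision point everything is continuous, $\operatorname{rank}P(x)$ is locally constant, and $G_\lambda(x)$ stays non-degenerate, so the signature $(p'(x),q'(x))$ of $G(x)P(x)$ cannot jump. For $x$ slightly off the collision, the eigenvalues of $H(x)$ inside $\Gamma$ are exactly the $p$ colliding ones of positive kind and the $q$ of negative kind, all definite; the counting Lemma applied to the small system $(H(x)|_{\mathcal{M}},G_\lambda(x))$ then forces $p'(x)=p$ and $q'(x)=q$. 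By local constancy the same holds at the collision point, which is the assertion, and the converse direction is obtained by reading this equality the other way.

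The step I expect to be the main obstacle is pinning down the non-degeneracy of $G_\lambda$ and its persistence as $x$ varies --- this is precisely what keeps $G(x)P(x)$ from acquiring a spurious zero eigenvalue and hence what makes its signature constant across the collision. Everything else is routine: the Hermiticity of $G(x)P(x)$ and the restriction of the intertwining relation to $\mathcal{M}$ require only the manipulations above, and the counting Lemma supplies the final count once the problem has been confined to the root subspace.
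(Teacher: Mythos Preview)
The paper does not actually supply its own proof of this lemma; it simply states it as a corollary and defers to Ref.~\cite{YakubovichStarzhinskii}. So there is no in-paper argument to compare against, only the standard Krein--Gel'fand--Lidskii treatment.

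Your algebraic setup is correct and is the real content: the contour representation gives $GP=P^\dagger G$, hence $GP$ is Hermitian, $\operatorname{Ran}P$ and $\ker P$ are $G$-orthogonal, and the restriction $G_\lambda$ is non-degenerate with the same signature as $GP$. Likewise the check that $(H|_{\mathcal M},G_\lambda)$ is again a pseudo-Hermitian pair is fine.

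The part to reconsider is the perturbative finish. This lemma is precisely the tool the paper invokes to \emph{prove} continuity of the kinds in \appref{app:continuity}; grounding the lemma itself on a continuity argument (``move $x$ slightly off the collision, there the eigenvalues are definite, then pass to the limit'') risks circularity and also imports an unstated hypothesis that nearby the collision the eigenvalues are all of definite kind. Neither is needed. Once you have the pair $(H|_{\mathcal M},G_\lambda)$ at the fixed value of $x$, the counting Lemma applied \emph{directly} to that $(p+q)$-dimensional system says the number of positive (resp.\ negative) eigenvalues of $G_\lambda$ equals the number of positive- (resp.\ negative-) kind eigenvalues of $H|_{\mathcal M}$; since $H|_{\mathcal M}$ has $\lambda$ as its sole eigenvalue, this is exactly the hypothesised $(p,q)$. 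That finishes the lemma at a single point with no perturbation, and the converse is read off the same equality. This direct route is essentially how the argument runs in Ref.~\cite{YakubovichStarzhinskii}.
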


We are now ready to prove the continuity of the kind of eigenvalues.

\begin{proof}

Let $\gamma_j$ describe small non-intersecting disks of radius $\epsilon$ around every \emph{distinct} eigenvalue, $\lambda_j$, of $H(0)$. We must show that for arbitrarily small $\epsilon >0$ we can find a $\delta = \delta(\epsilon)$ such that $H(\delta)$ has the same number of eigenvalues of the first kind and the same number of eigenvalues of the second kind inside each $\gamma_j$ as $H(0)$ does.

Let $\lambda$ be a $k$-fold eigenvalue of $H(0)$. If $\operatorname{Im} \lambda > 0$ we keep $\epsilon$ small enough that the corresponding disk $\gamma$ does not touch the real axis. Now, by the continuity of eigenvalues one can choose a $\delta$ small enough that $H(\delta)$ also has $k$ eigenvalues inside $\gamma$. A similar procedure works when $\operatorname{Im} \lambda < 0$.

Now we consider the case where $\lambda$ is a real $k$-fold eigenvalue of $H(0)$ where $p$ eigenvalues of first kind and $q=k-p$ eigenvalues of second kind meet. Let $\Omega(0)$ and $\Omega(\delta)$ denote the sum of the root subspaces corresponding to the eigenvalues of $H(0)$ and $H(\delta)$, respectively, which lie inside $\gamma$. Let  $P(0)$ and $P(\delta)$ denote the projection matrices corresponding to $\Omega(0)$ and $\Omega(\delta)$ respectively. 

To define the projection matrices explicitly we make use of the resolvent $R(z) = (M - z \id)^{-1} $ of a matrix, $M$. The resolvent is analytic in a region that does not contain any eigenvalues of $M$~\cite{ashida2020nonhermitian}. Since $\lambda$ is an isolated eigenvalue the projection matrices above can be related to the resolvent through~\cite{ashida2020nonhermitian,kato1966perturbation}
\begin{equation}
    P(0) = \frac{i}{2\pi} \int_{\tau}  \! dz \, \left (H(0) - z \id \right )^{-1}
\end{equation}
and
\begin{equation}
    P(\delta) = \frac{i}{2\pi} \int_{\tau} \! dz \, \left (H(\delta    ) - z \id \right )^{-1},
\end{equation}
where $\tau$ is the circumference of $\gamma$. We see that $P(0)$ can be made as close as desired to $P(\delta)$ provided $H(0)$ is sufficiently close to $H(\delta)$. By Lemma~\ref{lem:pq}, $G(0)P(0)$ has $p$ positive and $q$ negative eigenvalues. By the continuity of eigenvalues, $G(\delta)P(\delta)$ also has $p$ positive and $q$ negative eigenvalues. Thus, by Lemma~\ref{lem:pq} again, $H(\delta)$ has $p$ eigenvalues of positive kind and $q$ eigenvalues of negative kind inside $\gamma$. 
\end{proof}

This proof closely follows the one given in Chapter III of Ref.~\cite{YakubovichStarzhinskii}.

\section{Finding the intertwining operator}\label{app:finding}

For any pseudo-Hermitian matrix, $H$, the intertwining operator $G$ is not unique. Given a Hermitian matrix $G_n$ such that $H = G_n^{-1} H^\dagger G_n$ we can construct another intertwining operator, $G_{n+1} = G_n H$ which is also Hermitian~\cite{ruzsicka2021conserved}.

An exhaustive method to find all possible solutions $G$ for the equation $AG = GB$ is,
\begin{align}
    AG &= GB,\nonumber\\
    (I\otimes A)\operatorname{vec}G &= (B^T\otimes I)\operatorname{vec}G,\nonumber\\
    (I\otimes A - B^T\otimes I)\operatorname{vec}G &= 0.
\end{align}
Here $\otimes$ is the Kronecker product and $\operatorname{vec}G$ is created by arranging the entries of the matrix $G$ in a column, $(g_{11}, g_{12},\dots, g_{21},\dots)^T$. Evidently the solutions for $G$ form a vector space since $\operatorname{vec}G$ forms the null space of a matrix.

Our results apply to every Hermitian intertwining operator that one can find for a pseudo-Hermitian matrix $H$.

\section{Formulation in terms of generalized $\mathcal{PT}$ symmetry}\label{app:PT}

Pseudo-Hermitian matrices commute with the generalized $\mathcal{PT}$ symmetry operator, $H = SH^* S^{-1}= S \mathcal{T} H \mathcal{T}^{-1} S^{-1}$ [see \eqnref{eq:defnPseudo2}]. Formulating the results of this paper in terms of $S$ and $\mathcal{T}$ is, however, not straightforward, partly due to the complications of $\mathcal{T}$ being an antilinear operator~\cite{uhlmann2016anti}.

Since we are concerned with symmetry breaking, let us assume $H$ starts off with all real eigenvalues and no exceptional point degeneracies. We can then diagonalize $H$ as
\begin{equation}
    H = \sum_i \lambda_i \ket{R_i}\bra{L_i}.
\end{equation}

The transpose of $H$ is $H^T = \sum_i \lambda_i \ket{L_i}^* \bra{R_i}^*$. Every matrix is similar to its transpose and in this case the similarity transformation is given by,
\begin{align}
    H &= K H^T K^{-1} \quad \text{where }\\
    K &= \sum_ie^{i\phi_i}\ket{R_i}\bra{R_i}^*\quad \text{and}\label{defineK}\\
    K^{-1} &= \sum_i e^{-i\phi_i}\ket{L_i}^* \bra{L_i}.
\end{align}
Here, each $e^{i\phi_i}$ is an arbitrary phase factor. Note that Eq.\eqref{defineK} is \emph{not} invariant under a change in absolute phase of any eigenvector, $\ket{R_i}\to e^{i\theta}\ket{R_i}$. Now
\begin{equation}
    H = G^{-1} H^\dagger G = G^{-1} (K^{-1})^* H^* K^* G,
\end{equation}
implying $S=G^{-1} (K^{-1})^*$. Now that we have expressed $S$ in terms of $G$ all that remains is to translate results in terms of $G$ to $S$.

We note that $\Tilde H = H^\dagger$ is pseudo-Hermitian with respect to the Hermitian matrix $G^{-1}$. Symmetry breaking of $H^\dagger$ is characterized by the sign of $\bra{\Tilde R_i} G^{-1} \ket{\Tilde R_i} =\bra{L_i} G^{-1} \ket{L_i}= \bra{L_i} S K^* \ket{L_i}$ since the right eigenvector of $H^\dagger$ is the left eigenvector of $H$. In fact, symmetry breaking in $H^\dagger$ is equivalent to it occurring in $H$ implying that the sign of $\bra{L_i} S K^* \ket{L_i}$ characterizes the kind of any eigenvector $\ket{R_i}$ of $H$. We can simplify this further,
\begin{align}
    \bra{L_i} S K^* \ket{L_i}&= e^{-i\phi_i}\bra{L_i} S \ket{R_i}^*\\
    &= e^{-i\phi_i}\bra{L_i} S \mathcal{T} \ket{R_i}.\label{emergenceT}
\end{align}
The arbitrary phase $e^{-i\phi_i}$ in the first equation above is problematic. Eq.~\eqref{emergenceT} also generates additional phase factors on the transformation $\ket{R_i}\to e^{i\theta}\ket{R_i}$ since $\mathcal{T}$ is antilinear and $\mathcal{T} e^{i\theta}\ket{R_i} = e^{-i\theta}\ket{R_i}^* \neq e^{i\theta}\ket{R_i}^*$. These complications suggest that a possible formulation in terms of $S$ and $\mathcal T$ may require phase fixing (gauge fixing) of the eigenvectors and we leave this to future work.

\section{Details for characterizing the points of symmetry breaking}

\subsection{Neighborhood of a point of symmetry breaking}\label{app:boundaries}

We start from the matrices in \eqnref{eq:originalG} and \eqnref{eq:originalH} and make an infinitesimal displacement in parameter space such that the intertwining operator changes to
\begin{equation}
    G(\epsilon) = \begin{pmatrix}
        \eta_1 + \epsilon \Tilde \eta_1 & 0\\
        0 & -\eta_2 - \epsilon \Tilde \eta_2
    \end{pmatrix} = 
    G_0 + \epsilon \begin{pmatrix}
         \Tilde \eta_1 & 0\\
        0 & -\Tilde \eta_2
    \end{pmatrix},
\end{equation}
where $\Tilde \eta_1$ and $\Tilde \eta_2$ are arbitrary real numbers and $ \epsilon > 0$ is arbitrarily small. A general pseudo-Hermitian matrix with respect to $G(\epsilon)$ and which reduces to $H_0$ when $\epsilon = 0$ is given by $H(\epsilon) =$
\begin{equation}
     \begin{pmatrix}
        \lambda + a + \epsilon (\Tilde \lambda + \Tilde a) & (\eta_2+\epsilon \Tilde \eta_2) (b + \epsilon \Tilde b) e^{i(\theta+\epsilon \Tilde \theta)}\\
        -(\eta_1+\epsilon \Tilde \eta_1) (b + \epsilon \Tilde b) e^{-i(\theta+\epsilon \Tilde \theta)} & \lambda - a + \epsilon (\Tilde \lambda - \Tilde a)
    \end{pmatrix}
\end{equation}
where $b = \frac{a}{\sqrt{\eta_1 \eta_2}}$ and $\Tilde a,\Tilde b,\Tilde \theta,\Tilde \lambda$ are real. Again we see that the eigenvalues do not depend on $\theta+\epsilon \Tilde \theta$, and are trivially shifted by $\lambda + \epsilon \Tilde \lambda$ which controls the overall trace of the matrix. We therefore set these terms to zero without loss of generality to get $H(\epsilon) =$
\begin{equation}\label{eq:loss}
    \begin{pmatrix}
         a +\epsilon \Tilde a & (\eta_2+\epsilon \Tilde \eta_2) \left (\frac{a}{\sqrt{\eta_1 \eta_2}} + \epsilon \Tilde b \right )\\
        -(\eta_1+\epsilon \Tilde \eta_1) \left (\frac{a}{\sqrt{\eta_1 \eta_2}} + \epsilon \Tilde b \right ) &  -a - \epsilon  \Tilde a
    \end{pmatrix}.
\end{equation}

We first consider the case where $a=0$ and $H_0$ had a diabolic degeneracy. In this case $H(\epsilon)$ is given by
\begin{equation}
    H(\epsilon) = \epsilon \begin{pmatrix}
        \Tilde a & (\eta_2+\epsilon \Tilde \eta_2) \Tilde b \\
        -(\eta_1+\epsilon \Tilde \eta_1)  \Tilde b & -\Tilde a
    \end{pmatrix}.
\end{equation}
This has the same form as a general pseudo-Hermitian matrix in \eqnref{eq:originalH} (with the overall trace and phase factor removed) and can therefore admit all permitted eigenvalues including from the strongly stable regions $(+,-)$ and $(-,+)$.

Now we consider the case where $a\neq0$ and $H_0$ had an exceptional degeneracy. In order to prove our statement we only need to consider perturbations which create real eigenvalues. Our strategy would be to show that the larger of these eigenvalues cannot change its kind regardless of the kind of perturbation applied.

Now for any matrix, $M = \begin{pmatrix}
    x & y\\
    z & w
\end{pmatrix}$, with real eigenvalues, the larger eigenvalue is given by $\frac{x - w + \sqrt{D}}{2}$ where the discriminant, $D = (x-w)^2 + 4yz$, is positive by the assumption of real eigenvalues. The corresponding eigenvector is $\ket{v} = \begin{pmatrix}
    x-w + \sqrt{D}\\2z
\end{pmatrix}$. The kind of this eigenspace with respect to $G(\epsilon)$ is given by the sign of
\begin{equation}
    \expect{v|G(\epsilon)|v} = |x-w + \sqrt{D}|^2(\eta_1 + \epsilon \Tilde \eta_1) - 4|z|^2(\eta_2 + \epsilon \Tilde \eta_2).
\end{equation}

On evaluating this quantity for the matrix in \eqnref{eq:loss} and expanding in orders of $\epsilon$, we find that the $\mathcal{O}(\epsilon^0)$ is zero since we started with a degeneracy of indefinite kind at $\epsilon=0$. The next term is order $\mathcal{O}(\epsilon^\frac{1}{2})$ given by $4 a \eta_1 \sqrt{D}$ (one can check that $D$ is $\mathcal{O}(\epsilon)$). Since $\eta_1 > 0$ by assumption of invertibility of $G(\epsilon)$, $a\neq 0$ by assumption of exceptional point, and $D>0$ by assumption of real eigenvalues, this is indeed the leading term with its sign being the same as the sign of $a$. 

\subsection{Topological characterization of exceptional points in the boundaries}\label{app:topological}

For the purposes of this section we will consider a constant intertwining operator
\begin{equation}
    G = \begin{pmatrix}
        +1 & 0\\
        0 & -1
    \end{pmatrix} = \sigma_3,
\end{equation}
which is Hermitian as well as unitary and follow the procedure first laid out in Ref.~\cite{yoshida2019ring}. The pseudo-Hermitian matrix (upto overall trace) is given by
\begin{equation}
    H =  \begin{pmatrix}
        a & b e^{i\theta}\\
        - b e^{-i\theta} & - a
    \end{pmatrix},
\end{equation}
where $a$ and $b$ are real. We define the Hermitian matrix
\begin{equation}
    \mathcal{H} = \begin{pmatrix}
        0 & iH\\
        -iH^\dagger & 0
    \end{pmatrix},
\end{equation}
which satisfies two Hermitian chiral symmetries, $\mathcal{H} U_1 + U_1 \mathcal{H} = 0$ and $\mathcal{H} U_2 + U_2 \mathcal{H} = 0$. Here,
\begin{equation}
    U_1 = \begin{pmatrix}
        \mathbb{I}_2 & 0\\
        0 & -\mathbb{I}_2
    \end{pmatrix}\quad \text{and} \quad 
    U_2 = \begin{pmatrix}
        0 & G\\
        G & 0
    \end{pmatrix}
\end{equation}
are both unitary. We can block diagonalize the Hamiltonian $\mathcal H$ with plus and minus sectors of $U := iU_1 U_2$. That is, we find a unitary transformation so that in the new basis,
\begin{equation}
    U = \begin{pmatrix}
        \mathbb{I}_2 & 0\\
        0 & -\mathbb{I}_2
    \end{pmatrix}\quad \text{and} \quad 
    \mathcal{H} := \begin{pmatrix}
        H_+ & 0\\
        0 & H_-
    \end{pmatrix}.
\end{equation}
This gives $H_+ = -i\sigma_2 H \sigma_1$ and $H_- = i\sigma_2 H \sigma_1$. The number of eigenvectors corresponding to negative eigenvalues of $H_+$ gives the relevant topological index---the zeroth Chern number. The eigenvalues of $H_+$ are $a\pm b$. Thus when $b^2=a^2$ (at the exceptional point) the Chern number is 0 when $a>0$ and 1 when $a<0$.

The proof here relies on the unitary of $G$. When $G$ is not unitary we may use the results from Ref.~\cite{zhou2019topology}, which provide a method to continuously deform (in a symmetry-respecting way) any invertible Hamiltonian $H$ into a unitary matrix $U$, where $U$ is given by the unitary matrix in the polar decomposition of $H = UP$ and $P = H^\dagger H$ is positive-definite. Thus, we expect these results to hold even when $G$ was originally only Hermitian and not unitary.
\end{document}